\documentclass[11pt,a4paper]{article}
\pdfoutput=1 

\usepackage[margin=2.4cm]{geometry}
\usepackage{amsmath,amssymb,amsfonts,amsthm}
\usepackage{caption}
\usepackage{graphicx}
\usepackage{booktabs}
\usepackage{placeins}
\usepackage{xcolor}
\usepackage[numbers,sort&compress]{natbib}
\usepackage[colorlinks=true,linkcolor=blue!60!black,citecolor=blue!60!black,urlcolor=blue!60!black]{hyperref}

\newcommand{\C}{\mathbb{C}}
\newcommand{\R}{\mathbb{R}}

\newcommand{\Ct}{\mathcal{C}}
\newcommand{\ov}[1]{\overline{#1}}
\newtheorem{theorem}{Theorem}
\newtheorem{definition}[theorem]{Definition}

\newtheorem{proposition}[theorem]{Proposition}
\newtheorem{corollary}[theorem]{Corollary}

\title{\bfseries Realified tensor networks: quantum circuit simulation\\ on real-valued matrix accelerators}
\author{Yusheng Zhao\textsuperscript{1,2},
  Xiwei Pan\textsuperscript{1},
  Enji Xiong\textsuperscript{2,3},
  Chengkai Zhu\textsuperscript{2},
  Jin-Guo Liu\textsuperscript{1,2,$\star$} \\[2pt]
  \small $^{1}$Thrust of Advanced Materials, The Hong Kong University of\\
  \small Science and Technology (Guangzhou), Guangdong 511453, China\\
  \small $^{2}$QudeLeap Research, Shanghai 200030, China\\
  \small $^{3}$Thrust of Artificial Intelligence, The Hong Kong University of\\
  \small Science and Technology (Guangzhou), Guangdong 511453, China\\[2pt]
  \small $\star$ \href{mailto:jinguoliu@hkust-gz.edu.cn}{jinguoliu@hkust-gz.edu.cn}}
\date{\small\itshape August 4, 2026}

\begin{document}
\maketitle

\begin{abstract}
Tensor-network contraction simulates quantum circuits, but modern matrix
accelerators (NPUs, TPUs) expose only real GEMM pipelines, so the complex
networks of quantum simulation must be reconstructed in software. We resolve
the mismatch by a realification rewrite that maps any complex tensor network to
a real one. At each merge of two complex tensors, a rank-3 structure tensor
realizes Gauss's three-multiplication (3M) formula; contractions with one or no
complex operand need only two or one real products. We prove a tight cost law:
overhead $1 + 2m + r$ in real multiplications, where $m$ and $r$ are the volume
fractions of two- and one-complex-operand contractions, never exceeding
$3\times$ relative to real contraction, with every intermediate at most
doubled in size. On 67 circuits (random, Clifford+$T$, QAOA, VQE), the law holds
across the real-to-complex range and complex-gate placement, not count,
governs cost. Contraction orders transfer from the complex network with a
relative arithmetic-cost gap below $5\times 10^{-4}$ on 66 of 67 circuits; the
exception closes under a few steps of low-temperature simulated annealing. On an
Ascend 910 NPU the rewrite beat both the four-real-GEMM baseline and a per-GEMM
Gauss lowering on all twelve random circuits and on 52 of 55 structured cells
(three cells slower by at most 12\%); the four-GEMM baseline was slower by a
median $1.7\times$ (random) and $1.4\times$ (structured). Realification makes
complex tensor-network contraction native to real-only matrix engines.
\end{abstract}

\section{Introduction}

\begin{figure}[tb]
  \centering
  \includegraphics[width=0.98\linewidth]{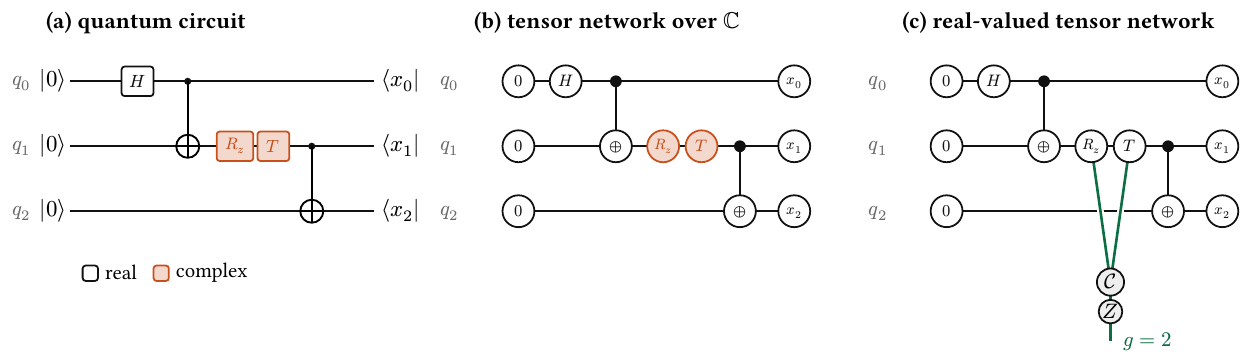}
  \caption{From quantum circuit to real-valued contraction. (a)~A quantum circuit $U$
  with real (white) and complex (orange) gates; the simulation target is a single
  amplitude $\langle x|U|0\rangle$, i.e.\ the projection of the output onto the
  computational-basis product state $|x\rangle = |x_0 x_1 x_2\rangle$. (b)~Its
  tensor-network representation over $\C$ (schematic): gates become tensors and qubit
  lines become contracted indices. (c)~Realification as a local graph rewrite: each complex tensor
  acquires a dimension-2 leg (green) carrying its real and imaginary parts,
  contracted green legs meet at the structure tensor $\Ct$, and a conjugated
  tensor carries the sign flip $Z = \mathrm{diag}(1,-1)$ on its green leg (Sec.~\ref{sec:rep});
  the result is a single static real einsum.}
  \label{fig:overview}
\end{figure}

Tensor-network contraction simulates quantum
circuits at scale, from single-amplitude evaluation \cite{MarkovShi2008,Villalonga2019} to
the verification \cite{Villalonga2019,GrayKourtis2021} and spoofing \cite{Pan2022} of
random-circuit-sampling experiments \cite{Arute2019}. Its computational core is a schedule
of large matrix multiplications over the complex numbers. Hardware, however,
has moved in the opposite direction: the matrix engines of modern accelerators, including
TPU matrix units \cite{Lewis2022}, GPU tensor cores \cite{Ootomo2023}, and NPUs such as
Huawei Ascend, expose only real floating-point types on their matrix-multiply paths.
Complex multiplication must be reconstructed in software.

Existing practice reconstructs complex multiplication at the general matrix multiply (GEMM) level. The textbook expansion
$(a + ib)(c + id) = (ac - bd) + i(ad + bc)$ costs four real multiplications per complex
multiplication, hence four real GEMMs per complex GEMM, hereafter \emph{GEMM-4M},
the textbook lowering; on real-only accelerators such as
TPUs, complex tensor arithmetic is likewise reconstructed from real-valued
GEMM primitives~\cite{Hauru2021,Morningstar2022,Ganahl2023}. Gauss's
three-multiplication method, hereafter \emph{GEMM-3M}, trades one real GEMM for
three additional matrix additions (Eq.~\eqref{eq:gauss3m}). Its multiplication count is
optimal \cite{Winograd1971}; although less stable than GEMM-4M, it is accurate enough
for practical use \cite{Higham1992}. Both
lowerings underlie high-performance complex-GEMM kernels~\cite{VanZee2017}. Either lowering, applied to every contraction as fully complex, pays a flat $4\times$ or $3\times$; a mostly-real circuit gains nothing from its real gates.
In contrast, \emph{circuit-level} realification theory
\cite{Rudolph2002,Shi2003,Aharonov2003,McKague2009} compiles any complex quantum circuit into a real
one by adding a single ancilla qubit whose state distinguishes the real and imaginary
components of each amplitude. Each complex gate is replaced by a real gate that also acts
on the ancilla. So the ancilla qubit line, a bond of dimension~2, runs through the full
circuit in gate order: the \emph{worldline}. The worldline is a physical bond: the ancilla is generically entangled with the
data register. Each worldline edge that an intermediate's subtree cuts leaves an open
dimension-2 index on that intermediate, so cutting $k$ edges inflates it by $2^k$
(Sec.~\ref{sec:law}, remarks following Theorem~\ref{thm:law}). Keeping the cut
count bounded forces every subtree to be a union of a few contiguous windows of
gate order, i.e.\ transfer-matrix-style contraction, whose intermediates grow
exponentially with circuit width. Geometry-driven orders~\cite{MarkovShi2008,GrayKourtis2021}, the ones optimizers find
worth using, scatter each subtree across time and generically cut the worldline at
many points. The frozen wiring, set by the circuit's time structure rather than the
network's geometry, thus exacts a trade-off between a poor contraction order and
exponential memory inflation.

We take a third route: realify the tensor network itself by a local graph rewrite, hereafter \emph{network-3M}
(Fig.~\ref{fig:overview}).
The underlying algebraic primitive, viewing $\C$ as a 2-dimensional real
*-algebra whose structure tensor mediates
multiplication, appears in *-tensor formulations of quantum
mechanics~\cite{Bauer2020,BauerNietner2022}; we develop it into a
contraction-level method (Sec.~\ref{sec:rep}).
The rewrite turns the complex network into an equivalent real one,
distributing the real--imaginary bookkeeping along the contraction tree rather than a
single worldline, so the contraction order remains a free choice.
\begin{enumerate}
\item We map any complex tensor
network to a real one purely by graph rewriting (Sec.~\ref{sec:rep}): each complex tensor
acquires one dimension-2 auxiliary index, complex multiplication enters through a single
permutation-symmetric structure tensor $\Ct$, and conjugation and global phases become
local linear operators. The map itself is the *-algebra realification
of~\cite{Bauer2020,BauerNietner2022}; the new element is its compilation: the $\Ct$
wiring is aligned with a chosen contraction tree (Sec.~\ref{sec:algebra}), so the
rewrite is the entire intervention. The realified network is
an ordinary real einsum with fixed shapes, so the existing software stack is reused
without modification: real GEMM kernels on the accelerator, tensor-network contraction
and order optimization, and static-graph compilers~\cite{Sabne2020,Chen2018,Ansel2024}
all operate on it.
Order optimization runs on the
complex network before conversion and the orders transfer (Sec.~\ref{sec:bench}).
Optimizers that price the realified factor graph directly use a different
loop-volume convention (counting loop iterations rather than scalar
multiplications), quantified in Appendix~\ref{app:indopt}.

\item We prove a tight cost law (Sec.~\ref{sec:law}). The arithmetic overhead
relative to the \emph{real skeleton} (the same network contracted over $\R$)
is $1 + 2m + r$, where $m$ and $r$ are the volume fractions of steps
with two or one complex operand (merges and rides). Every intermediate at
most doubles in size. The bound lies between
$1\times$ in the all-real limit and $3\times$ in the all-complex limit, so realification
is never more expensive than 3M emulation and automatically cheaper when the circuit has
real structure, an advantage that flat per-GEMM lowering cannot exploit (a per-GEMM
dispatcher with operand-realness checks matches the arithmetic count but not the
structural benefits of Sec.~\ref{sec:conclusions}).

\item We validate the method in three ways. First,
contraction-volume audits across 67 circuits (twelve random circuits from the qflex
collection~\cite{Villalonga2019} plus 55 Clifford+$T$, QAOA, and VQE instances
spanning the real-to-complex range) verify the law's accounting and locate
each circuit on its range, from the $1\times$ all-real
limit to the $3\times$ ceiling set by the real rank of complex
multiplication~\cite{Winograd1971} (Tables~\ref{tab:bench}
and~\ref{tab:struct}). Complex-gate \emph{placement} rather than count governs
the cost: at fixed $T$-count, placement alone moves the overhead from $1.98$
to $2.93$.
Second, orders optimized on the complex network survive realification
(Fig.~\ref{fig:pipe}): convert-only matches full
reoptimization to within $5\times10^{-4}$ on 66 of 67 circuits, and the sole
exception, a 5-qubit test circuit, is closed by a short low-temperature
polish, so the pipeline needs no full reoptimization pass. A landscape theory explains and delimits this
flatness (Appendix~\ref{app:landscape}). Step counts are tree
invariants, and a density condition on complex leaves pins every cheap
tree near the $3\times$ ceiling. Reusing the complex-optimal tree costs
at most $3\times$ the realified optimum on any network, yet can approach
$2\times$ on an explicit green-sparse family: flatness is a property of
the circuit class, not of the method.
Third, on Ascend 910, the realified executor was faster than both GEMM-4M and
GEMM-3M on all twelve random circuits and on 52 of 55 structured
\emph{device-clean} cells, i.e.\ cells free of a documented device-software
fallback; the three non-wins lie at $0.90$--$0.99{\times}$. Median speedups over the 4M
baseline were ${\approx}\,1.7\times$ (random) and ${\approx}\,1.4\times$ (structured)
(Sec.~\ref{sec:wallclock}); the arithmetic saving
translates to end-to-end speedup only when the factorization is embedded at the
network level.
\end{enumerate}

\noindent Section~\ref{sec:rep} develops the representation and proves the cost law,
and Sec.~\ref{sec:bench} reports the benchmarks and hardware validation.

\section{Realified tensor networks}\label{sec:rep}

The realification of individual contractions via the structure tensor of the $*$-algebra $\C/\R$ was introduced by Bauer~\cite{Bauer2020,BauerNietner2022}; here we compile it into a full contraction-level method with provable cost bounds and hardware validation.
Sections~\ref{sec:rep:local} and~\ref{sec:algebra} present the
realification as a self-contained algebraic theory: a single complex
contraction is realified, then the construction is shown to be independent of contraction order.
Section~\ref{sec:law} turns to the practical consequences, deriving the
arithmetic cost law and showing that reverse-mode differentiation
requires no complex-number extensions.

\subsection{The representation and the structure tensor}\label{sec:rep:local}

\begin{definition}[Realification]\label{def:realify}
Let $A$ be a complex tensor, decomposed as $A = A_R + i A_I$ with $A_R, A_I$ real. Its
\emph{realification} $T_A$ is the real tensor obtained by stacking $A_R$ and $A_I$ along one
auxiliary index $g$ of dimension~2. A tensor is
\emph{green} if it carries such an index; real tensors need none.
\begin{center}
  \includegraphics[height=2.6cm]{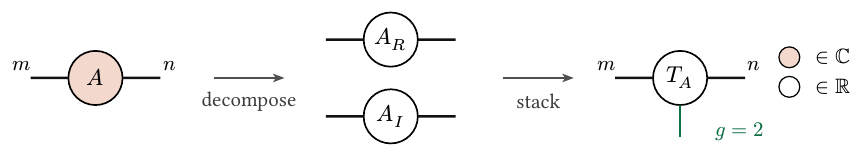}%
\\[2pt]{\small Realification: a complex tensor $A = A_R + iA_I$ becomes a
real tensor $T_A$ with one auxiliary dimension-2 index.}
\end{center}
\end{definition}

\paragraph{The multiplication tensor.}
Realification turns each tensor real, but a pairwise contraction must still
reproduce the complex product. The product is recovered by inserting a
\emph{multiplication tensor} $M$ between the green legs of the two operands.

\begin{definition}[Multiplication tensor]\label{def:mult}
The \emph{multiplication tensor} $M \in \R^{2 \times 2 \times 2}$ encodes
the product of $\C$ as a real algebra:
\begin{equation}\label{eq:mult}
  (xy)_c = \sum_{a,b=1}^{2} M_{abc}\, x_a y_b, \qquad
  M_{\cdot\cdot 1} = \begin{pmatrix} 1 & 0\\ 0 & -1 \end{pmatrix}, \qquad
  M_{\cdot\cdot 2} = \begin{pmatrix} 0 & 1\\ 1 & 0 \end{pmatrix},
\end{equation}
with indices $a, b, c \in \{1, 2\}$ ($1$~real, $2$~imaginary).
\end{definition}

\noindent Contracting the green legs of two realified tensors with $M$ reproduces
the complex contraction exactly: each complex product becomes one application
of $M$, and addition is componentwise.

\paragraph{Gauss's 3M lowering.}
For complex matrices $A=A_R+iA_I$ and $B=B_R+iB_I$, Gauss's method forms
\begin{equation}\label{eq:gauss3m}
  \begin{aligned}
    P_1&=A_RB_R, & P_2&=A_IB_I,
    & P_3&=(A_R+A_I)(B_R+B_I),\\
    AB&=(P_1-P_2)+i(P_3-P_1-P_2).
  \end{aligned}
\end{equation}
Thus GEMM-3M uses three real matrix multiplications and five real matrix additions,
whereas GEMM-4M uses four and two, respectively~\cite{Higham1992,VanZee2017}.
Equation~\eqref{eq:rank3}, displayed with the proof of Theorem~\ref{thm:law}, is the tensor form of the same bilinear identity, with
the output sign flip separated as $M=\Ct Z$ (Definition~\ref{def:structure}). Network-3M embeds this rank-3
factorization into the graph only where two green subgraphs merge.
\begin{center}
  \includegraphics[height=2.6cm]{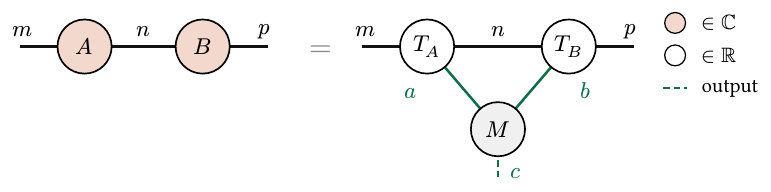}%
\\[2pt]{\small Multiplication via $M$: contracting the green legs of
$T_A$ and $T_B$ through $M$ reproduces the complex product.}
\end{center}

\paragraph{Symmetrization.}
$M$ distinguishes its legs: the first two carry inputs, the third carries the
output (Fig.~\ref{fig:sym}(a)). In a tensor network, nodes are wired in
arbitrary orientations, so we need a fully symmetric replacement.

\begin{definition}[Structure tensor]\label{def:structure}
Let $Z = \mathrm{diag}(1,-1)$. The \emph{structure tensor} $\Ct$ is obtained
by absorbing $Z$ into the output leg of $M$:
\begin{equation}\label{eq:ctensor}
  \Ct_{abc} = \sum_{c'} M_{abc'} Z_{c'c}
  = \operatorname{Re}\!\left(i^{\,a+b+c-3}\right), \qquad
  \Ct_{\cdot\cdot 1} = \begin{pmatrix} 1 & 0\\ 0 & -1 \end{pmatrix}, \qquad
  \Ct_{\cdot\cdot 2} = \begin{pmatrix} 0 & -1\\ -1 & 0 \end{pmatrix}.
\end{equation}
The closed form depends only on the sum $a+b+c$: $\Ct$ is invariant under
all six permutations of its legs (Fig.~\ref{fig:sym}(b)).
This tensor coincides with the complex-number *-algebra tensor
of Bauer~\cite{Bauer2020}.
\end{definition}

\begin{figure}[htb]
  \centering
  \includegraphics[width=0.56\linewidth]{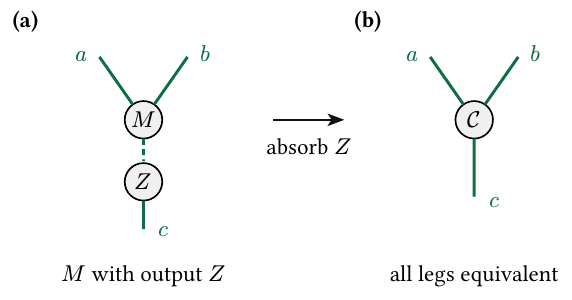}
  \caption{(a)~$M$ distinguishes input legs (solid) from output (dashed).
  (b)~Absorbing $Z$ yields $\Ct$, invariant under all permutations of its legs.}
  \label{fig:sym}
\end{figure}

\noindent Contracting with $\Ct$ alone gives the conjugated product $(xy)^*$;
multiplication is $\Ct$ followed by $Z$, i.e.\ $M = \Ct Z$.
A matrix product $C = AB$ then becomes the real contraction
\begin{equation}\label{eq:realified-product}
  [T_C]_{ij,c}
  = \sum_{k,\,a,\,b,\,c'} [T_A]_{ik,a}\,[T_B]_{kj,b}\;\Ct_{abc'}\,Z_{c'c}
  = \sum_{k,\,a,\,b} [T_A]_{ik,a}\,[T_B]_{kj,b}\;M_{abc}
\end{equation}
(Fig.~\ref{fig:product}), and the same pattern realifies any
pairwise tensor contraction.

\begin{figure}[htb]
  \centering
  \includegraphics[width=0.37\linewidth]{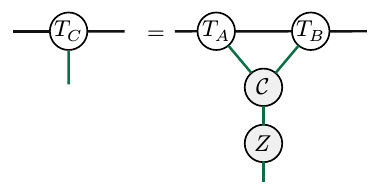}
  \caption{Matrix product $C = AB$ in the realified picture: the green legs
  of $T_A$ and $T_B$ merge through $\Ct$, followed by $Z$ on the output.}
  \label{fig:product}
\end{figure}

\paragraph{Conjugation and phases.}
Quantum circuit networks also require conjugation (bra vectors, adjoints)
and global phases. Both are local operations on the green leg, and neither adds
contraction-stage multiplications: conjugation is a sign flip, and a global phase,
known when the network is built, is absorbed into leaf data during construction.
(A phase applied at run time, e.g.\ a trainable variational parameter, would cost
its own arithmetic.)
Conjugation is the sign flip $Z$: $T_{A^*} = Z T_A$
(Fig.~\ref{fig:conj}(a)). A global phase $e^{i\varphi}$ is the rotation
$R_\varphi = \left(\begin{smallmatrix} \cos\varphi & -\sin\varphi\\
\sin\varphi & \cos\varphi \end{smallmatrix}\right)$
on the same leg, $T_{e^{i\varphi} A} = R_\varphi T_A$
(Fig.~\ref{fig:conj}(b)); pushing phases through the
network is a local gauge transformation.

\begin{figure}[htb]
  \centering
  \includegraphics[width=0.40\linewidth]{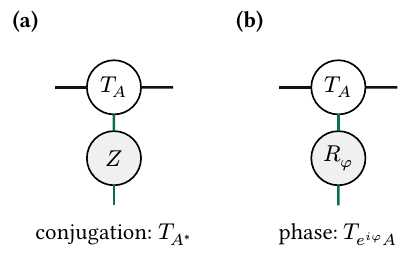}
  \caption{(a)~Conjugation: $Z$ on the green leg gives $T_{A^*}$.
  (b)~Global phase: the rotation $R_\varphi$ on the green leg gives $T_{e^{i\varphi}A}$.
  Neither adds contraction-stage multiplications: the sign flip is free, and
  static phases are absorbed into leaf data at construction.}
  \label{fig:conj}
\end{figure}

\subsection{Tree freedom and the algebraic rules}\label{sec:algebra}

Each pairwise contraction is faithful, so induction over any contraction
tree maps every complex tensor network to a real one of equal value, with
real and imaginary parts delivered on the one open green leg; in
particular, the tree optimizer remains free to choose the cheapest order.
The construction is, however, tree-dependent: realification inserts a
multiplication node $M = \Ct Z$ wherever two green legs merge, and which
legs merge depends on the contraction order, so different trees yield
structurally different real networks from the same complex one
(Fig.~\ref{fig:treefree}). The rest of this subsection recapitulates the local
algebra of $\Ct$~\cite{Bauer2020,BauerNietner2022} that makes these wirings interchangeable, a gauge
choice, and draws two further payoffs: conjugation signs push to the leaves
for free, a property used in the proof of Theorem~\ref{thm:law}, and fixing the
wiring \emph{before} the tree, as in circuit-level realification, is
exposed as exponentially costly (Sec.~\ref{sec:law}).

\begin{figure}[htb]
  \centering
  \includegraphics[width=0.71\linewidth]{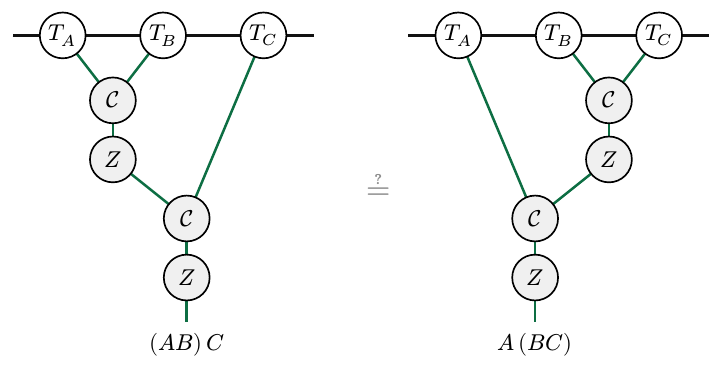}
  \caption{One complex network, two realifications. Contracting the same
  three-tensor chain along $(AB)C$ (left) or $A(BC)$ (right) places the
  $\Ct$ and $Z$ nodes differently, so the two real networks have
  different shapes. Both contract to the same tensor: the wiring is a gauge
  choice, and spider fusion re-aligns it from one tree to the other by
  local moves.}
  \label{fig:treefree}
\end{figure}

\paragraph{Spider fusion.}
The construction rests on \emph{spider fusion}: any tree of $\Ct$ nodes joined by
$Z$-dressed edges, called a \emph{spider}, with one $Z$ on each internal edge, as drawn in
Fig.~\ref{fig:algebra}(d), contracts to the same fully symmetric tensor,
which depends only on the external
legs~\cite{CoeckeDuncan2011,CoeckeKissinger2017} (proof by induction in
Appendix~\ref{app:axioms}). Wirings with cycles are excluded; each closed
loop contributes a scalar factor of $2$, but wirings aligned to contraction
trees are always acyclic. A wiring laid down for one tree can therefore be
transformed to any other by local moves, without changing the result. The
$Z$ factors on the edges cost nothing: they are sign flips, and conjugate
covariance (Eq.~\eqref{eq:rules}) pushes them through $\Ct$ onto the leaves,
where they conjugate the leaf tensors.

Spider fusion rests on four algebraic rules of $\Ct$.
Writing $\mathbf{1} = (1, 0)^\top$ for the real-algebra unit:
\begin{equation}\label{eq:rules}
  \Ct_{abc} = \Ct_{\sigma(a)\sigma(b)\sigma(c)} \;\;\forall \sigma \in S_3, \qquad
  \sum_{a'b'c'} Z_{aa'} Z_{bb'} Z_{cc'}\, \Ct_{a'b'c'} = \Ct_{abc}, \qquad
  \sum_b \Ct_{abc}\, \mathbf{1}_b = Z_{ac},
\end{equation}
together with the cascade rule (Fig.~\ref{fig:algebra}(d)): $(xy)z = x(yz)$.
These are local diagrammatic identities
(Fig.~\ref{fig:algebra}(b)--(e); verification in Appendix~\ref{app:axioms})
that make $(\R^2, \Ct, \mathbf{1})$ a commutative Frobenius algebra of
$\C$ over $\R$~\cite{CoeckeKissinger2017,Bauer2020}.
The Frobenius pairing is
$\beta(x,y) = \operatorname{Re}(xy)$, with Gram matrix $Z$; it links
the multiplication map $M = \Ct Z$ to the fully lowered $\Ct$.
Associativity
is a property of the $Z$-dressed chain used throughout, not of the undressed
$\Ct$ chain (Appendix~\ref{app:axioms}).

\begin{figure}[htb]
  \centering
  \includegraphics[width=0.98\linewidth]{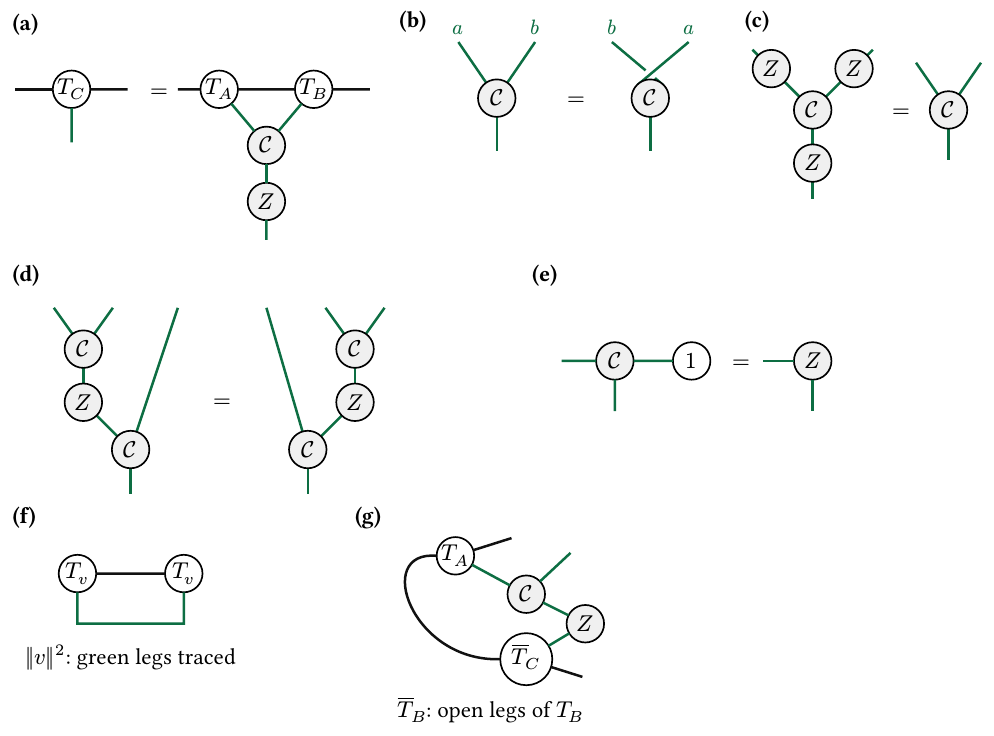}
  \caption{The algebra of the realified representation.
  (a)~Matrix product (see also Fig.~\ref{fig:product}).
  (b)~Permutation invariance: $\Ct$ is symmetric in all three legs.
  (c)~Conjugate covariance: $Z$ on all three legs cancels.
  (d)~Cascade rule (associativity).
  (e)~Unit rule: $\Ct$ contracted with $\mathbf{1} = (1,0)^\top$ reduces to $Z$.
  (f)~Norm square: no $\Ct$ needed.
  (g)~Reverse-mode rule: the pullback $\ov{T}_B$ reproduces
  $\ov{B} = A^\dagger \ov{C}$.}
  \label{fig:algebra}
\end{figure}

\subsection{Cost law and differentiation}\label{sec:law}

We now price the construction and show that reverse-mode differentiation
requires no complex-number extensions.

\begin{theorem}[Cost law]\label{thm:law}
Fix a binary contraction tree for a complex tensor network. Each pairwise contraction is a
\emph{merge} if both operands are green, a \emph{ride} if one is, or a \emph{pass} if neither is;
its \emph{volume} is its scalar-multiplication count, the product of the dimensions of
all indices in the step. Let $m$ and $r$ be the fractions of total volume, measured on
the \emph{real skeleton}, in merges and rides.
The realified contraction on the same tree has arithmetic overhead
\begin{equation}\label{eq:law}
  \textup{overhead} = 1 + 2m + r, \qquad 1 \le \textup{overhead} \le 3,
\end{equation}
counted in real scalar multiplications, and every intermediate's element
count is at most $2\times$ that of its real-skeleton counterpart.
\end{theorem}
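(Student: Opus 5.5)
The proof goes step by step along the fixed tree, working from the leaves upward. The overhead is then read off as a volume-weighted average of per-step multipliers.

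The first task is to fix what "green" means for an intermediate. An intermediate is green exactly when its subtree contains a complex leaf. Conjugation signs $Z$ are pushed onto the leaves using conjugate covariance and spider fusion (Eq.~\eqref{eq:rules}), so they cost nothing at contraction time. With this convention, every intermediate is either a real skeleton tensor, or that tensor with one extra index of dimension~2. This already gives the size claim: the element count is at most $2\times$. It also shows that green legs never accumulate, because a merge consumes two green legs and emits one.

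Next we price each step, letting $V_s$ be its real-skeleton volume.

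\begin{itemize}
\item A \emph{pass} is an ordinary real contraction and costs $V_s$.
\item A \emph{ride} contracts a real operand $B$ with $T_A$. The green index of $T_A$ is a free batch index, so the step is equivalent to two real products, $A_R B$ and $A_I B$. It costs $2V_s$.
\item A \emph{merge} is where the main content lies.
\end{itemize}

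For a merge, naive contraction with $M$ would cost $4V_s$, since $M$ has four nonzero entries. Instead we exhibit a rank-3 decomposition of $M$ (equivalently of $\Ct$ with $Z$ split off):
\begin{equation}\label{eq:rank3}
  M_{abc} = \sum_{k=1}^{3} u^{(k)}_a\, v^{(k)}_b\, w^{(k)}_c,
\end{equation}
with $u^{(1)}=v^{(1)}=(1,0)$, $u^{(2)}=v^{(2)}=(0,1)$ and $u^{(3)}=v^{(3)}=(1,1)$. The output vectors are $w^{(1)}=(1,-1)$, $w^{(2)}=(-1,-1)$ and $w^{(3)}=(0,1)$. This is Gauss's identity~\eqref{eq:gauss3m}. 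Checking the four nonzero entries of $M$ is routine. Contracting $u^{(k)}$ into $T_A$ and $v^{(k)}$ into $T_B$ costs only additions. The three resulting real contractions each cost $V_s$, and applying $w^{(k)}$ is again additions. So a merge costs $3V_s$ real multiplications.

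Summing over the tree, with fractions $m$, $r$ and $p=1-m-r$ of total volume, the total is $(p+2r+3m)V = (1+r+2m)V$. The bounds follow from $m,r\ge0$ and $m+r\le1$, which give $1\le 1+2m+r\le 3$. Tightness at both ends comes from the all-real network and from any network with all leaves complex, where every step is a merge.

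The main obstacle is not this count but the claim that the multipliers $2$ and $3$ are exact, i.e.\ that the rank-3 lowering is compatible with the network wiring. Two points need care. First, the $Z$ factors on internal edges (from $M=\Ct Z$) and from conjugated leaves must not introduce extra multiplications; I would handle this by absorbing $Z$ into $w^{(k)}$ or into leaf data via the covariance rule. Second, the merge output's green leg must feed correctly into the next step. For that I would invoke spider fusion to realign the wiring with the chosen tree. Finally, I would remark that $3$ is optimal because $M$ has tensor rank 3 over $\R$~\cite{Winograd1971}, so the merge multiplier cannot be lowered.
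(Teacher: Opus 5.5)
Your proposal is correct and follows the paper's proof essentially step for step. Because each intermediate carries at most one green leg, intermediates at most double in size. Passes cost $1\times$ and rides $2\times$, while merges cost $3\times$ via a real rank-3 Gauss decomposition. The overhead is the volume-weighted average $1+2m+r$, bounded using $m+r\le 1$.

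The only difference is cosmetic: you decompose $M$ instead of $\Ct$. Your output vectors $w^{(k)}$ are the paper's $w_k$ with $Z$ applied, which is exactly how the internal-edge $Z$ gets absorbed. And since each $M$-merge is faithful on its own, correctness follows by induction over the tree, so the appeal to spider fusion is not actually needed.
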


\begin{figure}[ht!]
  \centering
  \includegraphics[width=0.35\linewidth]{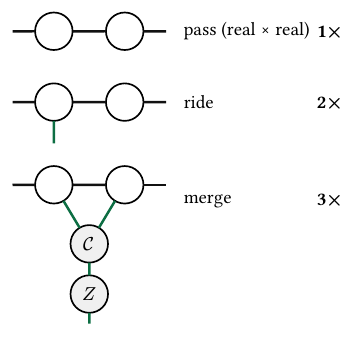}
  \caption{The three pairwise-contraction cases and their arithmetic cost
  relative to the same step on the real skeleton: no green leg ($1\times$, pass),
  a green leg riding on one operand ($2\times$, ride), and a green--green merge
  through $\Ct$ ($3\times$, with the rank-3 kernel of Eq.~\eqref{eq:rank3}).
  }
  \label{fig:cases}
\end{figure}

\begin{proof}
\textit{Memory.}\; Wire the spiders along the tree: where both children of a node carry
green legs, the legs merge through one $\Ct$; where only one does, its leg passes to the
parent. This wiring is tree-aligned, hence contracts to the correct value
(Sec.~\ref{sec:algebra}), and every intermediate carries at most one green leg, one
extra dimension-$2$ index, so no intermediate more than doubles.

\textit{Arithmetic.}\; A pass costs $1\times$. A ride costs $2\times$: the dimension-$2$
green leg is a spectator. A merge evaluates a conjugated product through $\Ct$
(Fig.~\ref{fig:cases}), and its cost is the real tensor rank of $\Ct$ (the minimal
number of real rank-$1$ terms): three real multiplications per complex product, attained
by Gauss's algorithm \cite{Higham1992} and provably minimal \cite{Winograd1971}. Written
as a tensor decomposition of~$\Ct$, with $e_1=(1,0)^\top,\; e_2=(0,1)^\top$ the standard basis of~$\R^2$,
\begin{equation}\label{eq:rank3}
  \Ct = \sum_{k=1}^{3} u_k \otimes v_k \otimes w_k, \qquad
  \begin{gathered}
  (u_1, v_1, w_1) = (e_1,\; e_1,\; e_1{+}e_2), \\
  (u_2, v_2, w_2) = (e_2,\; e_2,\; {-}e_1{+}e_2), \\
  (u_3, v_3, w_3) = (e_1{+}e_2,\; e_1{+}e_2,\; {-}e_2),
  \end{gathered}
\end{equation}
Substituting into Eq.~\eqref{eq:ctensor} recovers both slices. The three terms execute
a merge as three real contractions of the same shape as the real-skeleton step, so a
merge costs $3\times$; contracting the four nonzero entries of $\Ct$ densely would cost
$4\times$. The overhead is the volume-weighted average,
$3m + 2r + (1-m-r) = 1 + 2m + r$, and $m + r \le 1$ gives the bounds.
\end{proof}

Per-GEMM lowering, by contrast, pays a flat $3{\times}$ on every step
(GEMM-3M, Eq.~\eqref{eq:gauss3m}). Network-level realification can only do better: it pays
$3{\times}$ only on merges, while passes and rides are cheaper, so the
overhead $1{+}2m{+}r$ exploits partial realness that flat lowering cannot.

Three remarks sharpen the law.

\paragraph{Remark 1: merge count is a tree invariant.}
Any tree over
$n_c$ complex leaves contains exactly $n_c - 1$ merges, each fusing two green subtrees
into one; optimization controls only \emph{where} the merges land, hence how much volume
they carry (Proposition~\ref{prop:counts}, Appendix~\ref{app:landscape}).

\paragraph{Remark 2: the wiring must follow the tree.}
Fixing the wiring in advance in gate
order, the ancilla worldline of circuit-level realification \cite{Rudolph2002}, and
optimizing the tree around it forces each intermediate to carry one open green leg for
every frozen edge its subtree cuts; each open leg doubles the intermediate, so the
inflation is exponential in the number of cut edges.
For example, a subtree
spanning two disjoint windows of gate order cuts the worldline at four
points, so its intermediate carries four open dimension-2 legs, $2^{4}$
times the elements of its real-skeleton counterpart; restricting every
subtree to one contiguous window cuts only two, but such trees are the
transfer-matrix orders whose intermediates grow with circuit width.
Nor can
the worldline be compressed
away: the ancilla is generically entangled with the data register, its reduced state
mixed unless the amplitudes are real up to a global phase, so each open leg is a genuine
bond.

\paragraph{Remark 3: the law is exact, not a lower bound.}
Eq.~\eqref{eq:law} prices the generic nodewise construction
exactly; it is not a lower bound for every network. Operand structure can undercut
the $3\times$ merge price: when the two green operands are conjugates of each other
and the output is known to be real, as in the norm square of
Fig.~\ref{fig:algebra}(f), the two lanes contract directly at $2\times$ with no
$\Ct$ at all. The count assumes independent operands, no known-real outputs, and no
sharing of products across nodes; such structure only lowers the cost further.

\paragraph{Differentiation.}
The real-valued objectives of circuit simulation, expectation values, norms,
and losses built from inner products and adjoints, are not holomorphic: they depend
on both $z$ and $\bar{z}$. (A bare amplitude is polynomial, hence holomorphic, in
the gate-tensor entries; non-holomorphy enters with the conjugations of a
real-valued objective.) Differentiation in the complex picture therefore requires Wirtinger
calculus~\cite{KreutzDelgado2009}, which tracks $\partial f/\partial z$ and
$\partial f/\partial\bar{z}$ separately. After realification, however, every
tensor is real and every contraction is an ordinary real multilinear map, so
the standard real chain rule applies directly.

Concretely, for a pairwise contraction (Eq.~\eqref{eq:realified-product}),
the reverse-mode pullback with respect to $B$ is
\begin{equation}\label{eq:pullback}
  T_{\bar{B}} = T_A \cdot T_{\bar{C}} \cdot \Ct\, Z,
\end{equation}
the same structure-tensor wiring with the upstream adjoint $T_{\bar{C}}$
replacing the output (Fig.~\ref{fig:algebra}(g)). This reproduces
$\bar{B} = A^\dagger\,\bar{C}$ in the complex picture
(derivation in Appendix~\ref{app:pullback}). Because each node's
pullback is itself a real contraction of the same form, any real-valued
automatic differentiation framework can differentiate the entire realified
network without complex-number extensions.
This coverage is exact for multilinear contraction and its reverse-mode
derivative; spectral factorizations are not included. An ordinary real
singular-value decomposition (SVD) or QR decomposition of
a realified matrix, with the lane index folded into rows or columns, does not
reproduce the complex decomposition (rank and spectrum change: a rank-one complex
matrix can realify to a stacked matrix of real rank two with a different spectrum),
so truncation-based differentiable algorithms~\cite{Liao2019} require a
structure-preserving realification of the factorization itself, which we leave
open. Where decompositions enter, the complex-SVD
adjoint~\cite{Wan2019,Hubig2019} is still needed.

\section{Benchmarks and hardware validation}\label{sec:bench}

We validate the cost law and test the realified representation on
Ascend 910 hardware across 67 circuits: twelve random circuits from the qflex
collection~\cite{Villalonga2019}, and 55 structured instances (Clifford+$T$,
QAOA, and VQE) that span the real-to-complex range.

\paragraph{Setup.} The random circuits are single-amplitude networks
$\langle 0|U|0\rangle$ on rectangular lattices, Bristlecone,
Sycamore~\cite{Arute2019}, and IBM Rochester, spanning 5--70~qubits and
depths~8--32 (Table~\ref{tab:bench}). Three structured families probe the
partially real regime that random circuits do not reach.
\begin{itemize}
\item \emph{Clifford+$T$}: brickwork skeletons of real gates
(H, X, CZ) on a 48-qubit ring and a $6{\times}6$ grid at depth~16, with a
fraction $f_T \in \{0, 0.10, 0.25, 0.50, 1\}$ of one-qubit slots replaced by
$T$ gates; at fixed $f_T$, uniform, spatially clustered, and temporally
clustered placements decouple $T$-count from $T$-position.
\item \emph{QAOA}: MaxCut circuits (64-qubit ring, $6{\times}6$ and $8{\times}8$
grids, 3-regular graphs at $n = 48$) at depths $p = 1$--$8$, contracted as
single-edge $\langle Z_iZ_j\rangle$ expectation networks, with generic and
special ($\pi/4$-multiple) angles.
\item \emph{VQE}: hardware-efficient ans\"atze, class~A ($R_y$ + CZ, all real),
B ($R_y R_z$ + CZ, mixed), C ($R_z R_y R_z$ + CZ, generic complex), on
Ising and Heisenberg chains of 32--64 qubits and a $4{\times}4$ grid, at
depths $L = 2$--$8$, contracted as single Pauli-string expectation networks.
\end{itemize}
We apply no gauge preprocessing (Sec.~\ref{sec:conclusions}) to any circuit.

\paragraph{Orders and statistics.}
Contraction orders come from simulated annealing over binary
trees~\cite{Kalachev2021}, implemented in omeco~\cite{omeco2026}: a
\emph{green-blind} pass (real-skeleton cost as objective) establishes the
baseline, and a \emph{green-aware} pass (Eq.~\eqref{eq:law} as objective)
bounds what reoptimization can recover. An archived \emph{plan} records one
contraction tree and its slicing assignment; pipeline details
are in Appendix~\ref{app:pipeline}. For the structured families, two
independent jobs ran per cell; individual timings are noisy at this
replication depth, so we rely on the sign test (52 of 55 wins; 47 of 50
after collapsing the five QAOA generic/special sibling pairs, one-sided
$p \approx 2 \times 10^{-11}$) rather than per-cell
margins.
The 55 device-clean Ascend cells form the structured benchmark set.

\begin{figure}[tb]
  \centering
  \includegraphics[width=0.8\linewidth]{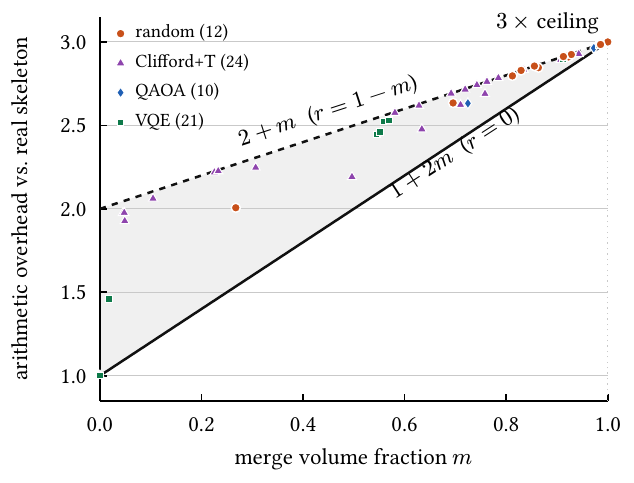}
  \caption{Measured overhead versus merge fraction $m$ for all 67 benchmark instances:
  twelve random circuits (orange circles), 24 Clifford+T (purple triangles),
  10 QAOA (blue diamonds), and 21 VQE (green squares).
  The law \eqref{eq:law} confines every point to the band between the solid line $1 + 2m$
  ($r = 0$) and the dashed line $2 + m$ ($r = 1 - m$).
  All-real circuits (Clifford+T at density~0, TFIM-type VQE) cluster at $(m{=}0,\,1\times)$;
  deep random and QAOA circuits accumulate against the $3\times$ ceiling.}
  \label{fig:law}
\end{figure}

\subsection{The cost law across the circuit suite}\label{sec:lawbench}

The cost law prices the full benchmark suite, and the audits locate each
circuit on its band (Fig.~\ref{fig:law},
Table~\ref{tab:bench}): all 67 instances (twelve random circuits and
55 structured: Clifford+T, QAOA, VQE) track Eq.~\eqref{eq:law} across its
range, from all-real circuits at $m = 0$, overhead $1\times$,
to Sycamore and deep QAOA at the $3\times$ ceiling ($m \approx 1$).
The merge fraction~$m$ is the primary coordinate: it reflects network
topology rather than gate count, a decoupling the structured families probe
directly (Table~\ref{tab:struct}).
The figure shows bimodal clustering: random and deep QAOA circuits
congregate near the $3\times$ ceiling ($m > 0.7$), while all-real circuits
(Clifford+$T$ at density~0, class-A VQE) cluster at $(m{=}0,\,1\times)$.
Engineered placements fill the intermediate band: the spatially clustered
Clifford+$T$ cells hug the upper dashed edge ($r$-dominated, $r = 0.88$),
and the isolated VQE Heisenberg $YY$ point at $(m{=}0.018,\,1.46)$
demonstrates operator-induced complexity on an otherwise all-real ansatz.

\begin{figure}[tb]
  \centering
  \includegraphics[width=0.80\linewidth]{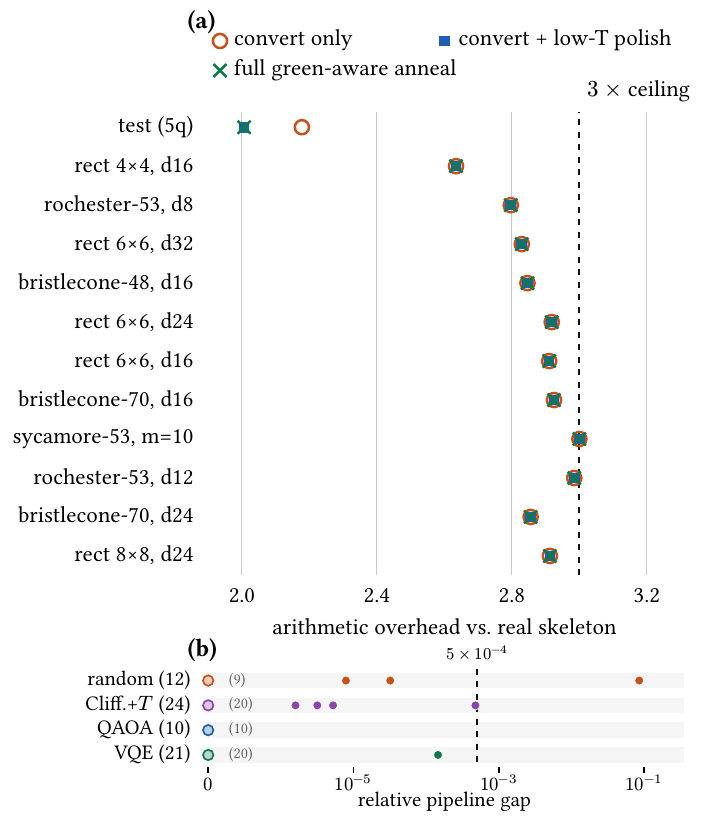}
  \caption{Pipeline comparison: reoptimization after realification is unnecessary
  on the tested suite, within the annealer's search budget.
  (a)~Arithmetic overhead of the twelve random circuits under three strategies:
  reuse the complex-optimal order as-is (open circles),
  refine with a short green-aware polish (filled squares),
  or run a full green-aware anneal (crosses). On eleven circuits the three
  markers overlap to within $10^{-4}$; the sole outlier (5-qubit test) is
  closed by the polish. Dashed line: $3\times$ ceiling.
  (b)~Relative pipeline gap
  $|o_{\text{conv}} - o_{\text{full}}|/o_{\text{full}}$,
  where $o$ denotes the overhead (Eq.~\eqref{eq:law}) under each pipeline,
  for all 67 circuits,
  in rows by family; row labels parenthesize the series size, and the
  left-edge marker is annotated with the count of gaps below $10^{-6}$.
  Open markers at the left edge collect those circuits (most exactly
  zero); filled dots show the remaining gaps. All 55 structured circuits and eleven of
  twelve random circuits fall below $5\times10^{-4}$ (dashed threshold).
  }
  \label{fig:pipe}
\end{figure}

Reoptimization after realification is unnecessary on the tested suite,
within the annealer's search budget.
We compare three pipelines (Fig.~\ref{fig:pipe}): \emph{convert-only},
which realifies the complex-optimal tree without further optimization;
\emph{convert-plus-polish}, which follows with a short low-temperature
green-aware refinement; and a \emph{full green-aware anneal} from scratch
on the realified network. Convert-only agrees with the full anneal to one
part in~$10^4$ on eleven of twelve circuits; the sole exception is the
5-qubit test circuit, where the polish alone closes the gap. Merge volume
is, within the stated search budget, an invariant of the circuit rather
than of the tree: the annealer finds no benefit in trading real-skeleton
quality for better merge placement.

Appendix~\ref{app:landscape} supplies the
theory. A density condition on complex leaves forces flatness
(Theorem~\ref{thm:transfer}), while an explicit green-sparse family shows
that the flatness can fail: the converted tree can require nearly twice as many
multiplications as the realified optimum, though never more than $3\times$
on any network (Theorem~\ref{thm:gap}). The 5-qubit outlier, the green-sparsest of the twelve
random circuits, is that regime in miniature. The same flatness extends to the
structured families: on all 55 device-clean instances, including green-sparse, spatially
clustered ones outside the reach of Theorem~\ref{thm:transfer}, convert-only
stays within $5\times10^{-4}$ of the full green-aware anneal.
All ten QAOA circuits achieve exactly zero gap: their merge-dominated
topology ($m \ge 0.72$, with four of ten at $m \approx 1$) pins the overhead near
the $3\times$ ceiling regardless of tree shape, leaving the annealer
nothing to improve.

The structured families extend the audit across the law's full range
(Table~\ref{tab:struct}). Placement, not gate count, sets the cost: at
$f_T = 0$ the measured overhead is exactly $1.000$; at fixed $f_T = 0.25$,
placement alone moves the ring-48 overhead from $2.93$ (uniform: $m = 0.94$)
through $2.23$ (temporal) to $1.98$ (spatial: $m = 0.05$, $r = 0.88$), as
clustered complex gates convert two-complex-operand merges into
one-complex-operand rides. QAOA saturates the ceiling with depth: ring-64
overhead $2.632$, $2.963$, $2.9999$, $3.000$ at $p = 1, 2, 4, 8$ (the
last two audit-only). Generic and special angles yield identical $(m, r)$ on every
matched pair; for this gate set, special angles change tensor values, not
realness structure. The VQE families interpolate as designed: class~A measures
exactly $1.000$ at every width and depth, classes B and C climb from $2.45$
and $2.52$ at $L = 2$ to $2.99$ at $L = 8$, and a $YY$ expectation on the
all-real class-A circuit isolates operator-induced complexity at $1.460$.

\begin{table}[tb]
  \centering
  \caption{Benchmark summary (random circuits).
  Each row describes one archived winning tree: $\log_2$ real cost is its
  scalar-multiplication count, $(m,r)$ its merge and ride volume fractions.
  The overhead column divides that cost by the cheapest real skeleton found
  across all optimization passes; the last column evaluates
  the law~\eqref{eq:law} from the same tree's own $(m,r)$.
  On eleven of twelve circuits the same tree wins both the skeleton and
  the real-cost searches, so the two columns agree.
  The 5-qubit exception uses a tree whose skeleton is $3.3\%$ costlier
  than the global best, yielding overhead $2.01\times$ against
  that best skeleton but a law value of $1.94\times$ from its own
  fractions; the gap is the skeleton premium.
  Run-to-run variation in the merge/ride split affects the second decimal
  of the last column ($1.93$--$1.97\times$ on reruns); the real cost
  itself is stable.
  The horizontal rule separates the nine core networks from the three
  extension circuits (Appendix~\ref{app:pipeline}).}
  \label{tab:bench}
  \small
  \resizebox{\linewidth}{!}{%
  \begin{tabular}{lrrrrrrr}
    \toprule
    circuit & qubits & complex/total leaves & $\log_2$ real cost & $m$ & $r$ & overhead & $1+2m+r$ \\
    \midrule
    test (5q)            & 5  & 11/40     & 9.25  & 0.268 & 0.408 & $2.01\times$ & $1.94\times$ \\
    rect $4{\times}4$, d16 & 16 & 68/212    & 13.29 & 0.695 & 0.244 & $2.63\times$ & $2.63\times$ \\
    rochester-53, d8     & 53 & 317/767   & 15.93 & 0.812 & 0.172 & $2.80\times$ & $2.80\times$ \\
    rect $6{\times}6$, d16 & 36 & 145/488   & 16.94 & 0.923 & 0.064 & $2.91\times$ & $2.91\times$ \\
    bristlecone-48, d16  & 48 & 209/655   & 16.93 & 0.863 & 0.119 & $2.85\times$ & $2.85\times$ \\
    rect $6{\times}6$, d24 & 36 & 229/660   & 23.34 & 0.918 & 0.082 & $2.92\times$ & $2.92\times$ \\
    bristlecone-70, d16  & 70 & 297/943   & 19.99 & 0.928 & 0.068 & $2.92\times$ & $2.92\times$ \\
    rect $6{\times}6$, d32 & 36 & 302/832   & 29.89 & 0.829 & 0.171 & $2.83\times$ & $2.83\times$ \\
    sycamore-53, 10 cycles & 53 & 1479/1764 & 34.19 & 1.000 & 0.000 & $3.00\times$ & $3.00\times$ \\
    \midrule
    rochester-53, d12    & 53 & 457/1071  & 20.73 & 0.986 & 0.013 & $2.98\times$ & $2.98\times$ \\
    bristlecone-70, d24  & 70 & 448/1273  & 27.24 & 0.856 & 0.144 & $2.86\times$ & $2.86\times$ \\
    rect $8{\times}8$, d24 & 64 & 412/1186 & 30.40 & 0.913 & 0.087 & $2.91\times$ & $2.91\times$ \\
    \bottomrule
  \end{tabular}}%
\end{table}

\paragraph{Complexity budget.}
The realification increment is additive and bounded on the log scale:
$\log_2(1{+}2m{+}r) \le \log_2 3 \approx 1.58$ on every
circuit. Peak space is byte-neutral: the $2\times$ memory law bounds each
intermediate's element count (transient engine workspace lies outside the theorem;
the audited working sets (Table~\ref{tab:audit-ind}) stay consistent with the bound), and since complex64
stores two f32 lanes per element, peak memory in
bytes is identical; on this artifact set no circuit needs slicing.
The comparison that matters on real-only hardware is not the overhead relative
to complex arithmetic but the cost of executing the complex network there
at all: per-GEMM lowering multiplies the complex plan's volume by~$4$ (4M)
or~$3$ (3M), so the realified plan at $1 + 2m + r \le 3$ never adds work
relative to what complex arithmetic already costs on such a device.

As a cross-check, independently re-optimizing each circuit's realified einsum
with the same optimizer policy (Table~\ref{tab:audit-ind},
Appendix~\ref{app:indopt}) produces trees that beat the archived schedule on
all twelve circuits, with realified space within two units of the complex
plan's $\log_2$ peak working set $\mathrm{sc}_{\C}$, and exactly
$\mathrm{sc}_{\C}+1$ on seven of twelve (per-circuit values are archived in
the data repository).

\subsection{Wall-clock comparison}\label{sec:wallclock}

We test whether the arithmetic saving translates to end-to-end speedup
on a real-only accelerator.
All three executors exploit whole-operand realness (the
\emph{structural-real shortcut}): passes and rides skip the merge kernel
entirely. Three matched Ascend 910 executors run each archived plan:
\begin{itemize}
\item \emph{GEMM-4M}: four textbook real products per complex merge
  (multiplication count $1 + 3m + r$).
\item \emph{GEMM-3M}: Gauss's three-multiplication rule at each complex
  merge of the same tree (multiplication count $1 + 2m + r$).
\item \emph{Network-3M}: the rank-3 factorization of
  Eq.~\eqref{eq:rank3} embedded into the tensor network at preparation
  time (multiplication count also $1 + 2m + r$ in the nodewise model of
  Theorem~\ref{thm:law}; the engine's executed factor graph carries a
  bounded loop-volume excess, quantified in
  Appendix~\ref{app:indopt}).
\end{itemize}
\noindent All three share the same f32 inputs, contraction tree, Ascend
\texttt{contract\_binary} primitive, and CANN 8.5 stack; each ran as a
separate single-NPU Slurm job. Reported values are medians of ten samples
after three warmups, admitted only after the precision gates of
Appendix~\ref{app:precision} pass.

\begin{table}[tb]
  \centering
  \caption{Wall-clock medians (ms) on twelve circuits (warmup~3,
  repeats~10, single Ascend~910 NPU, CANN~8.5). All three executors run the same
  archived plans. GEMM-3M applies Gauss's three-multiplication rule
  per merge without rewriting the network; network-3M embeds the
  factorization into the tensor network at preparation time.
  Best time per row in \textbf{bold}; speedup~$= t_{\text{4M}}/t_{\text{net-3M}}$.
  Within-job coefficients of variation
  are below 3\% on every cell except GEMM-3M on rect $4{\times}4$
  ($4.7\%$, one outlier sample).}
  \label{tab:main}
  \small
  \begin{tabular}{lrrrr}
    \toprule
    circuit & GEMM-4M & GEMM-3M & network-3M & speedup \\
    \midrule
    test (5q)                 & 4.50   & 4.91   & \textbf{3.48}  & 1.29 \\
    rect $4{\times}4$, d16    & 38.3   & 38.1   & \textbf{22.0}  & 1.74 \\
    rochester-53, d8          & 170    & 171    & \textbf{97.1}  & 1.75 \\
    rect $6{\times}6$, d16    & 87.0   & 88.3   & \textbf{51.6}  & 1.69 \\
    bristlecone-48, d16       & 117    & 125    & \textbf{72.5}  & 1.61 \\
    rect $6{\times}6$, d24    & 145    & 138    & \textbf{77.9}  & 1.86 \\
    bristlecone-70, d16       & 176    & 173    & \textbf{104}   & 1.68 \\
    rect $6{\times}6$, d32    & 349    & 314    & \textbf{186}   & 1.87 \\
    sycamore-53, 10 cycles    & 883    & 898    & \textbf{785}   & 1.12 \\
    rochester-53, d12         & 244    & 263    & \textbf{139}   & 1.76 \\
    bristlecone-70, d24       & 284    & 303    & \textbf{170}   & 1.67 \\
    rect $8{\times}8$, d24    & 429    & 416    & \textbf{307}   & 1.40 \\
    \midrule
    \multicolumn{4}{l}{median speedup vs.\ GEMM-4M} & 1.68 \\
    \bottomrule
  \end{tabular}
\end{table}

Network-3M was faster than both GEMM baselines on all twelve circuits, with
$t_{\mathrm{4M}}/t_{\mathrm{net\text{-}3M}} = 1.12$--$1.87{\times}$ and a median speedup of
$\approx 1.7{\times}$ (Table~\ref{tab:main}). Within-job
coefficients of variation are below 3\% on all cells but one
(GEMM-3M on rect $4{\times}4$, $4.7\%$), and per-circuit
margins range from 11\% to 47\%.
GEMM-3M, by contrast, runs within $10\%$ of GEMM-4M on every
circuit (median ratio~$1.01$). GEMM-3M and network-3M perform the same
number of multiplications per merge, yet GEMM-3M incurs three separate
kernel launches plus element-wise preparation and combination passes at each node,
whereas network-3M folds the factorization into the contraction graph
so that the execution engine can absorb these costs into fewer, larger
operations. The per-GEMM Gauss rule alone does not
translate into a wall-clock saving; on this software stack, the network-level
rewrite is what realizes it (a hand-fused per-GEMM 3M kernel, which we did not
test, could in principle recover part of the gap).

Because all three executors preserve structural-real shortcuts, the
multiplication-count reference for network-3M versus GEMM-4M is
$(1{+}3m{+}r)/(1{+}2m{+}r)$, with median
$1.31$; the measured speedup exceeds this reference on eleven circuits and
falls below it on sycamore-53, while remaining above parity in every case.
The multiplication count predicts but does not bound wall-clock time:
additions, materialization, and data movement are not separately accounted
for; the excess of the measured speedups over this reference is not
decomposed into launch, materialization, and data-movement contributions in
this work, and per-kernel profiling of the three executors is left to future
work.

The structured families confirm the wall-clock advantage
(Table~\ref{tab:struct}): network-3M beat GEMM-4M on 52 of 55 device-clean
Ascend cells, with median
speedup $t_{\mathrm{4M}}/t_{\mathrm{net\text{-}3M}} = 1.42{\times}$ (range $0.90$--$1.96{\times}$)
and family medians $1.41{\times}$ (Clifford+$T$), $1.80{\times}$ (QAOA), $1.34{\times}$ (VQE).
GEMM-3M again shows no systematic gain over GEMM-4M (family medians
$0.98$--$1.00$; per-cell ratios $0.83$--$1.40$). The three non-wins are mild slowdowns
($t_{\mathrm{4M}}/t_{\mathrm{net\text{-}3M}} = 0.90$--$0.99{\times}$, i.e.\ at most
$12\%$): two $f_T{=}1$ circuits whose predicted
$\sim\!20\%$ multiplication saving is absorbed by
execution-path overhead at their modest contraction volumes, and
one all-real ansatz cell at multiplication parity ($m = r = 0$).

Two controls bound what the structured numbers can claim. The nine device-clean all-real
cells execute identical multiplication counts in every executor at
$m = r = 0$, yet $t_{\mathrm{4M}}/t_{\mathrm{net\text{-}3M}} = 0.99$--$1.46$
(median $1.16$): these cells isolate the execution-path contribution that is
folded into every measured ratio. And on an A800 with native complex
arithmetic, the same plans run fastest in native complex form on 44 of 55
rows (family medians
$t_{\mathrm{4M}}/t_{\mathrm{net\text{-}3M}} = 0.75$--$0.83$): the
wall-clock advantage is specific to the real-only accelerator regime. Peak
memory remained byte-neutral on all 55 device-clean instances.

\begin{table}[tb]
  \centering
  \caption{Structured-circuit excerpt: fifteen of the 55 device-clean cells.
  All randomized instance parameters in the rows shown ($T$-gate
  placements, rotation angles) were drawn with pseudorandom seed~$0$; QAOA rows use generic angles, and
  the special-angle siblings are structurally identical. Wall-clock medians (ms) on the same
  Ascend~910/CANN~8.5 stack as Table~\ref{tab:main}, warmup 3, repeats 10,
  median over two independent jobs. GEMM-3M applies Gauss's
  three-multiplication rule per merge at run time, without rewriting the
  network. The full structured-instance matrix, including GPU and native-complex
  columns, excluded cells, and fallback flags, is in the
  data repository. Best time per row in \textbf{bold};
  speedup~$= t_{\text{4M}}/t_{\text{net-3M}}$.}
  \label{tab:struct}
  \footnotesize
  \setlength{\tabcolsep}{4pt}
  \begin{tabular}{@{}lrrrrrrr@{}}
    \toprule
    circuit & $m$ & $r$ & $1{+}2m{+}r$ & GEMM-4M & GEMM-3M & network-3M & speedup \\
    \midrule
    C+$T$ ring-48, $f_T{=}0$                      & 0.000 & 0.000 & 1.000 & 110.2 & 109.0 & \textbf{93.46} & 1.18 \\
    C+$T$ ring-48, $f_T{=}0.25$ spatial           & 0.048 & 0.881 & 1.977 & 131.8 & 144.0 & \textbf{105.4} & 1.25 \\
    C+$T$ ring-48, $f_T{=}0.25$ temporal          & 0.233 & 0.760 & 2.226 & 161.2 & 163.8 & \textbf{118.1} & 1.36 \\
    C+$T$ ring-48, $f_T{=}0.25$ uniform           & 0.943 & 0.043 & 2.928 & 188.2 & 186.0 & \textbf{117.1} & 1.61 \\
    C+$T$ grid-$6{\times}6$, $f_T{=}0.50$ uniform & 0.985 & 0.015 & 2.985 & 146.3 & 145.5 & \textbf{97.75} & 1.50 \\
    \midrule
    QAOA ring-64, $p{=}1$                         & 0.725 & 0.183 & 2.632 & 114.7 & 112.3 & \textbf{65.53} & 1.75 \\
    QAOA ring-64, $p{=}2$                         & 0.972 & 0.018 & 2.963 & 212.7 & 208.8 & \textbf{120.6} & 1.76 \\
    QAOA grid-$6{\times}6$, $p{=}1$               & 0.989 & 0.007 & 2.985 & 93.39 & 93.44 & \textbf{47.57} & 1.96 \\
    QAOA grid-$6{\times}6$, $p{=}2$               & 1.000 & 0.000 & 3.000 & 232.5 & 209.5 & \textbf{143.0} & 1.63 \\
    QAOA grid-$8{\times}8$, $p{=}1$               & 0.999 & 0.001 & 2.999 & 180.2 & 165.7 & \textbf{94.93} & 1.90 \\
    \midrule
    VQE-A TFIM-32, $L{=}8$ (all real)             & 0.000 & 0.000 & 1.000 & 114.2 & 104.0 & \textbf{86.09} & 1.33 \\
    VQE-A Heis-32, $L{=}4$, $YY$ term             & 0.018 & 0.424 & 1.460 & 54.94 & 50.82 & \textbf{40.67} & 1.35 \\
    VQE-B TFIM-32, $L{=}2$                        & 0.545 & 0.356 & 2.446 & 82.48 & 80.46 & \textbf{52.30} & 1.58 \\
    VQE-B TFIM-32, $L{=}8$                        & 0.985 & 0.015 & 2.985 & 405.9 & 380.4 & \textbf{230.6} & 1.76 \\
    VQE-C TFIM-32, $L{=}4$                        & 0.919 & 0.073 & 2.911 & 210.6 & 223.6 & \textbf{156.9} & 1.34 \\
    \bottomrule
  \end{tabular}
\end{table}

\paragraph{F32 accuracy.}\label{sec:precision}
The Ascend backend is f32-only. Table~\ref{tab:precision} verifies that
the f32 amplitudes are accurate by comparing each component of the
Ascend network-3M result with a ComplexF64 reference (\texttt{CudaComplex<f64>}
on an A800~GPU via the cuQuantum stack~\cite{Bayraktar2023}, same archived plan). Real and imaginary parts each
differ by less than $0.002\%$ on every circuit, as does the normwise error
$|\Delta z|/|z|$ (Appendix~\ref{app:precision}).

\begin{table}[tb]
  \centering
  \caption{Component-wise and normwise relative difference (\%) between the
  Ascend~910 network-3M f32 result and the A800 ComplexF64 reference for the
  single-amplitude contraction $\langle 0|U|0\rangle$.
  Each result is one complex scalar; the columns report
  $|\Delta\mathrm{Re}|/|\mathrm{Re}_{\mathrm{ref}}|$,
  $|\Delta\mathrm{Im}|/|\mathrm{Im}_{\mathrm{ref}}|$, and
  $|\Delta z|/|z_{\mathrm{ref}}|$, all computed from the archived amplitude
  pairs; the normwise error also stays below $0.002\%$ on every circuit.}
  \label{tab:precision}
  \small
  \begin{tabular}{lccc}
    \toprule
    circuit & Re (\%) & Im (\%) & $|\Delta z|/|z|$ (\%) \\
    \midrule
    test (5q)              & $4.2 \times 10^{-5}$ & $2.8 \times 10^{-5}$ & $2.9 \times 10^{-5}$ \\
    rect $4{\times}4$, d16  & $3.5 \times 10^{-4}$ & $3.3 \times 10^{-4}$ & $3.3 \times 10^{-4}$ \\
    rochester-53, d8       & $1.2 \times 10^{-3}$ & $1.2 \times 10^{-3}$ & $1.2 \times 10^{-3}$ \\
    rect $6{\times}6$, d16  & $1.2 \times 10^{-3}$ & $7.4 \times 10^{-4}$ & $7.7 \times 10^{-4}$ \\
    bristlecone-48, d16    & $1.0 \times 10^{-3}$ & $1.3 \times 10^{-3}$ & $1.0 \times 10^{-3}$ \\
    rect $6{\times}6$, d24  & $1.1 \times 10^{-3}$ & $1.0 \times 10^{-3}$ & $1.1 \times 10^{-3}$ \\
    bristlecone-70, d16    & $1.2 \times 10^{-3}$ & $1.5 \times 10^{-3}$ & $1.3 \times 10^{-3}$ \\
    rect $6{\times}6$, d32  & $1.1 \times 10^{-3}$ & $9.5 \times 10^{-4}$ & $1.1 \times 10^{-3}$ \\
    sycamore-53, 10 cycles & $8.0 \times 10^{-4}$ & $1.2 \times 10^{-3}$ & $8.3 \times 10^{-4}$ \\
    \midrule
    rochester-53, d12      & $1.9 \times 10^{-3}$ & $1.7 \times 10^{-3}$ & $1.8 \times 10^{-3}$ \\
    bristlecone-70, d24    & $1.7 \times 10^{-3}$ & $1.9 \times 10^{-3}$ & $1.9 \times 10^{-3}$ \\
    rect $8{\times}8$, d24  & $1.96 \times 10^{-3}$ & $1.9 \times 10^{-3}$ & $1.9 \times 10^{-3}$ \\
    \bottomrule
  \end{tabular}
\end{table}

\FloatBarrier

\section{Conclusions}\label{sec:conclusions}
We formulated tensor-network contraction over the complex field as a real
contraction: each complex tensor acquires one auxiliary dimension-2 index.
The tight cost law $1 + 2m + r$, with every intermediate's size at most
$2\times$ that of its real-skeleton counterpart, is proved and audited
across all 67 benchmark circuits.
Convert-only contraction orders match full reoptimization to within $5\times10^{-4}$ on 66 of
67 circuits, the exception closed by a cheap polish.
On Ascend 910, the resulting network-3M executor was faster than both GEMM-4M
and GEMM-3M on all twelve random circuits and on 52 of 55 device-clean
structured cells (the three non-wins at $0.90$--$0.99{\times}$), with median
speedup $t_{\mathrm{4M}}/t_{\mathrm{net\text{-}3M}} = 1.68{\times}$ on random circuits and $1.42{\times}$ on structured families (arithmetic and engine-path components separated in Sec.~\ref{sec:wallclock}).
GEMM-3M, which applies the same three-multiplication rule per merge but
without rewriting the network, tracks GEMM-4M in the medians ($1.01$
random, $0.98$--$1.00$ structured), with per-cell ratios spanning
$0.83$--$1.40$; the network-level rewrite, not the arithmetic
identity alone, drives the speedup.

The advantage over per-GEMM complex lowering is structural, not only
arithmetic: a dispatcher with operand-realness checks matches the
multiplication count of Eq.~\eqref{eq:law}, but only the realified network is
a single real einsum, so reverse-mode differentiation uses the standard
real-valued rules that existing frameworks provide (Sec.~\ref{sec:law};
factorizations such as SVD remain open).

Against the flat $4\times$ of per-step GEMM-4M lowering (the textbook cost
when complex arithmetic is reconstructed from real GEMMs, as on
TPUs~\cite{Hauru2021,Morningstar2022,Ganahl2023}) and the flat $3\times$ of uniform
GEMM-3M~\cite{Higham1992}, realification pays $1+2m+r$: a saving of
$1.33$--$2.06\times$ in total multiplications on the random suite and up to
$4\times$ in the all-real limit. Precision management (bfloat16
splitting~\cite{Lewis2022}, compensated real-GEMM
accumulation~\cite{Ootomo2023}, and the 3M stability analysis of~\cite{Higham1992})
composes with the representation unchanged, since every kernel is a real GEMM.
Beyond simulation, the representation extends to any complex tensor
network, including expectation values of projected entangled pair states
(PEPS) and structured complex linear algebra, and carries its gauge freedom and real-valued differentiation rule
with it. Two extensions remain open: gauge
preprocessing can reduce the number of complex leaves, and
slicing for distributed contraction commutes with the construction, though the
interaction with optimal slicing has not been formally characterized.

Several caveats qualify the evidence: random-circuit cells were timed in
one Slurm job each and structured cells in two, so per-cell margins
carry run-to-run noise; the aggregate win records (all twelve random
circuits; 52 of 55 structured cells, 47 of 50 after deduplicating the
QAOA sibling pairs, sign test $p \approx 2 \times 10^{-11}$)
are the primary robustness evidence; QAOA and VQE cells contract one representative Pauli-string
expectation each, not full Hamiltonians or optimization loops; reverse-mode
differentiation of the realified network has not been benchmarked; all
comparisons are at f32 precision; and the wall-clock advantage is specific to
real-only accelerators, the same plans running fastest in native complex form
on an A800 (44 of 55 device-clean rows at these problem sizes).

\paragraph{Data availability.} The archived contraction plans, experiment manifests, wall-clock measurements, correctness-gate outputs, and analysis scripts supporting the findings of this article are publicly available at \url{https://doi.org/10.5281/zenodo.21791682}, with the full Rust toolchain (yao-rs, omeco, and omeinsum-rs) pinned by commit SHA.

\paragraph{Funding information.} This work was partially supported by the National Key R\&D
Program of China (Grant No.~2024YFB4504004). This work was also supported by the National Natural
Science Foundation of China (Grant No.~12404568).

\appendix

\section{The reoptimization landscape}\label{app:landscape}

Section~\ref{sec:lawbench} reports that contraction orders optimized on the
complex network survive realification essentially unchanged on eleven of
twelve random circuits. This appendix explains that observation: it identifies
which quantities are invariants of the tree, states when flatness of the
landscape is forced, and shows by explicit construction that the flatness is
not universal.

Throughout, fix a network with $n$ leaves, $n_c \ge 1$ of them green (complex)
and $n_r = n - n_c$ real. For a binary contraction tree $T$, let $V(T)$ be
its real-skeleton volume, $m(T), r(T)$ its merge and ride volume fractions,
and $C(T) = \bigl(1 + 2 m(T) + r(T)\bigr) V(T)$ its realified cost
(Theorem~\ref{thm:law}). Let $V^* = \min_T V(T)$ and
$C^* = \min_T C(T)$.

\begin{proposition}[Invariance of step counts]\label{prop:counts}
Every binary contraction tree over the network has exactly $n_c - 1$ merge
steps and exactly $n_r$ non-merge steps (rides and passes combined). Only the
split of the $n_r$ non-merge steps into rides and passes, and the volumes
carried by all steps, depend on the tree.
\end{proposition}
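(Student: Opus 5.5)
We argue by counting internal nodes of the binary tree and tracking greenness. A binary contraction tree over $n$ leaves has exactly $n-1$ internal nodes (pairwise contraction steps). A node is green if and only if its subtree contains at least one complex leaf. This is the wiring of the proof of Theorem~\ref{thm:law}: a green leg either passes upward or merges through $\Ct$, and is never absorbed, so it survives to the parent. We then classify each step by how many of its two children are green.

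\emph{Merges.} Consider the subforest of green nodes. Each green internal node has either two green children (a merge) or exactly one (a ride). We count the number $G$ of green subtrees present as contraction proceeds. Initially $G = n_c$, since each complex leaf is its own green subtree, and at the root $G = 1$ because $n_c \ge 1$. A merge fuses two green subtrees into one, lowering $G$ by one. A ride attaches a non-green subtree to a green one, leaving $G$ unchanged, and a pass creates no green subtree. Hence the number of merges is exactly $n_c - 1$. Equivalently, the green nodes with two green children are the branching nodes of the minimal subtree spanning the $n_c$ complex leaves, and a binary tree with $n_c$ leaves has $n_c - 1$ branching points.

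\emph{Non-merges.} Every step is exactly one of merge, ride, or pass, so the number of non-merge steps is $(n-1) - (n_c - 1) = n - n_c = n_r$, independent of the tree.

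\emph{What varies with the tree.} To see that the ride/pass split genuinely depends on the tree, note that it is not fixed by $n_c$ and $n_r$ alone. For example, contracting all real leaves among themselves first yields $n_r - 1$ passes and one ride, whereas attaching each real leaf directly to a green subtree yields $n_r$ rides. Step volumes depend on the index sets of the two operands, which are determined by the tree. The only delicate point is the convention that greenness propagates upward, which follows from the tree-aligned wiring used in Theorem~\ref{thm:law}. No step of this argument is a real obstacle.
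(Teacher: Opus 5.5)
Your proof is correct and is essentially the paper's argument. The paper counts merges by induction on the root split, $(n_c^{(1)}-1)+(n_c^{(2)}-1)+1=n_c-1$; this is the structural form of your green-component count $G$, where each merge fuses two green subtrees into one. The non-merge count then follows by the same subtraction, $(n-1)-(n_c-1)=n_r$. Your ride/pass example generalizes the paper's three-leaf instance $g,x_1,x_2$, but it only shows variation when $n_r\ge 2$; for $n_r\le 1$ the split is forced ($n_r$ rides and no passes).
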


\begin{proof}
Induction on the tree. A leaf has no steps. For an internal root whose
children subtrees contain $n_c^{(1)}$ and $n_c^{(2)}$ green leaves: if both
are positive, the root is a merge and the totals are
$(n_c^{(1)} - 1) + (n_c^{(2)} - 1) + 1 = n_c - 1$ merges; if one is zero, the
root is not a merge and the merge count is $n_c - 1$ from the green side
alone (a subtree with no green leaves contains no merges). The non-merge count is then $(n - 1) - (n_c - 1) = n_r$. That the
ride/pass split varies is shown by the three-leaf network $g, x_1, x_2$ with
$g$ green: the tree $(g x_1) x_2$ has two rides, while $g (x_1 x_2)$ has one
pass and one ride.
\end{proof}

\noindent Optimization therefore never changes \emph{how many} merges occur,
only \emph{where} they land, i.e.\ which volumes they carry. Flatness of the
landscape is a statement about volumes, and it holds exactly when no cheap
tree can shelter large volumes from the merges.

\begin{theorem}[Density implies flatness]\label{thm:transfer}
Let $0 \le \epsilon \le 1/2$, and suppose the network satisfies the density
hypothesis
\begin{equation}\label{eq:density}
  m(T) \;\ge\; 1 - \epsilon
  \qquad \text{for every tree } T \text{ with } V(T) \le 3 V^*.
\end{equation}
Then every skeleton-optimal tree $T_0$ satisfies
\begin{equation}
  \frac{C(T_0)}{C^*} \;\le\; \frac{3}{3 - 2\epsilon} \;\le\; 1 + \epsilon:
\end{equation}
green-blind optimization followed by conversion is $(1+\epsilon)$-optimal for
the realified objective, and reoptimization cannot recover more than a factor
$3/(3-2\epsilon)$.
\end{theorem}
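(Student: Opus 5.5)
Fix a realified-optimal tree $T^*$, so $C(T^*) = C^*$, and a skeleton-optimal tree $T_0$, so $V(T_0) = V^*$. The plan is to bound $C(T_0)$ from above and $C^*$ from below. The density hypothesis \eqref{eq:density} can be used on $T^*$ only once we know $T^*$ is cheap in skeleton terms. We get that from Theorem~\ref{thm:law}, which gives $V(T) \le C(T) \le 3V(T)$ for every tree.

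\emph{Upper bound.} By Theorem~\ref{thm:law}, $C(T_0) = (1 + 2m(T_0) + r(T_0))V^* \le 3V^*$. No density assumption is needed here.

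\emph{$T^*$ lies in the density regime.} Since $C^* \le C(T_0) \le 3V^*$ and $V(T^*) \le C(T^*) = C^*$, we get $V(T^*) \le 3V^*$. Thus \eqref{eq:density} applies to $T^*$ and gives $m(T^*) \ge 1 - \epsilon$. This is the key step; the threshold $3V^*$ in the hypothesis is chosen exactly so that it catches every candidate realified optimum. I do not expect it to be an obstacle, but it must be said explicitly, because the hypothesis is stated only for trees that are cheap in skeleton volume.

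\emph{Lower bound.} Using $m(T^*) \ge 1-\epsilon$ and $r(T^*) \ge 0$, we have $C^* \ge (1 + 2(1-\epsilon))V(T^*) = (3 - 2\epsilon)V(T^*) \ge (3-2\epsilon)V^*$, because $V^*$ is the minimum skeleton volume. Combining the two bounds gives $C(T_0)/C^* \le 3/(3-2\epsilon)$.

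\emph{Final inequality.} It remains to show $3/(3-2\epsilon) \le 1+\epsilon$, which is equivalent to $3 \le (1+\epsilon)(3-2\epsilon) = 3 + \epsilon - 2\epsilon^2$, i.e.\ $\epsilon(1 - 2\epsilon) \ge 0$. This holds exactly for $0 \le \epsilon \le 1/2$, which explains the range stated in the theorem.

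\emph{Interpretation.} Reoptimization can improve on $C(T_0)$ by at most the ratio $C(T_0)/C^*$, so both claimed readings follow. One caveat: the upper bound uses only the crude estimate $C(T_0) \le 3V^*$, not density applied to $T_0$. That is enough, since $T_0$ has the smallest possible skeleton volume and so $3V^*$ already matches the $3$ in the numerator.
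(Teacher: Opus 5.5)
Your proof is correct and is essentially the paper's argument. The paper lower-bounds $C(T)$ for every tree by splitting on whether $V(T) > 3V^*$ (then $C(T) \ge V(T) > 3V^*$) or not (then the density hypothesis gives $C(T) \ge (3-2\epsilon)V^*$). You instead apply the density hypothesis only to the realified optimum $T^*$, after observing $V(T^*) \le C^* \le C(T_0) \le 3V^*$; this is the same observation in a slightly more economical form.
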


\begin{proof}
Since overheads lie in $[1, 3]$, $C(T_0) \le 3 V(T_0) = 3 V^*$. For any tree
$T$: if $V(T) > 3V^*$ then $C(T) \ge V(T) > 3V^*$; otherwise
Eq.~\eqref{eq:density} applies, and with $p + r + m = 1$ (pass fraction $p$),
$1 + 2m + r = 3 - 2p - r \ge 3 - 2(1-m) \ge 3 - 2\epsilon$, so
$C(T) \ge (3 - 2\epsilon) V(T) \ge (3 - 2\epsilon) V^*$. Hence
$C^* \ge (3-2\epsilon) V^*$ and the ratio follows;
$3/(3-2\epsilon) \le 1+\epsilon$ for $\epsilon \le 1/2$.
\end{proof}

\begin{corollary}[Checkable sufficient condition]\label{cor:crude}
For any tree $T$, $1 - m(T) - r(T) \le 1 - m(T) \le n_r\, v_{\max}(T) / V(T)$,
where $v_{\max}(T)$ is the largest single-step skeleton volume: by
Proposition~\ref{prop:counts} the non-merge volume is spread over exactly
$n_r$ steps. Thus Eq.~\eqref{eq:density} holds with
$\epsilon = n_r \sup_T v_{\max}(T)/V(T)$, the supremum over trees with
$V(T) \le 3V^*$.
\end{corollary}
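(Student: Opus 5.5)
The corollary has two parts: a per-tree inequality, and the resulting choice of $\epsilon$ in the density hypothesis of Theorem~\ref{thm:transfer}.

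For the per-tree inequality, fix any binary tree $T$. The left inequality $1-m(T)-r(T) \le 1-m(T)$ is immediate, since $r(T)\ge 0$. For the right inequality, I write the total volume as $V(T)=\sum_{s} v_s$ over the $n-1$ steps $s$, each $v_s$ measured on the real skeleton. By the definition of $m(T)$ as a volume fraction, $1-m(T) = \sum_{s\ \text{non-merge}} v_s / V(T)$. Proposition~\ref{prop:counts} says there are exactly $n_r$ non-merge steps, whatever the tree. Bounding each of their volumes by $v_{\max}(T)=\max_s v_s$ gives $\sum_{s\ \text{non-merge}} v_s \le n_r\, v_{\max}(T)$, and dividing by $V(T)>0$ gives the claim.

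For the second part, set $\epsilon := n_r \sup_T v_{\max}(T)/V(T)$, with the supremum over trees satisfying $V(T)\le 3V^*$. This supremum is a maximum, because the network has finitely many binary trees. For any such tree, the first part gives $1-m(T) \le n_r v_{\max}(T)/V(T) \le \epsilon$, that is $m(T)\ge 1-\epsilon$, which is exactly Eq.~\eqref{eq:density}. The one remaining issue is the range restriction $\epsilon\le 1/2$ in Theorem~\ref{thm:transfer}. The corollary only asserts that the density hypothesis holds with this $\epsilon$. I will remark that the theorem's conclusion then applies whenever the computed $\epsilon$ is at most $1/2$, and otherwise the bound is vacuous.

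I expect no real obstacle. The only points needing care are:
\begin{itemize}
\item the convention that $m$, $r$, $v_{\max}$ are all measured on the real skeleton, consistent with Theorem~\ref{thm:law};
\item the fact that the count $n_r$ comes from Proposition~\ref{prop:counts} rather than depending on $T$.
\end{itemize}
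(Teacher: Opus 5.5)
Your proof is correct and matches the paper's argument: $1-m(T)$ is the non-merge volume fraction, Proposition~\ref{prop:counts} fixes the number of non-merge steps at $n_r$ for every tree, and bounding each of those steps by $v_{\max}(T)$ gives the inequality, so Eq.~\eqref{eq:density} holds for every tree with $V(T)\le 3V^*$. Your caveat that Theorem~\ref{thm:transfer} applies only when the resulting $\epsilon \le 1/2$ agrees with the paper's own remark that this bound can exceed $1$ on deep circuits.
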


\noindent The corollary is crude (on deep circuits the volume concentrates
in few steps and the bound can exceed $1$) but it isolates the mechanism:
non-merge volume is confined to as many steps as there are \emph{real}
leaves. When green leaves are dense, every large step has green leaves on
both sides in every cheap tree, and the landscape is pinned to the
$3\times$ ceiling. On the benchmark suite the hypothesis is verified within
search budget rather than proved: the green-aware anneal of
Sec.~\ref{sec:lawbench} sampled no tree with $V(T) \le 3V^*$ and small $m$,
consistent with merge fractions $0.695$--$1.00$ across the eleven flat
circuits.

Flatness is not universal. The following construction shows that on
green-sparse networks, the converted skeleton optimum can require nearly
twice as many real multiplications as the realified optimum (memory is not
at stake: the $2\times$ bound of Theorem~\ref{thm:law} is tree-independent).
The reoptimization pass that the tested suite renders unnecessary is therefore
essential in the worst case.

\begin{theorem}[Conversion gap]\label{thm:gap}
For every $\epsilon > 0$ there is a network family with $n_c = 2$ green
leaves and a \emph{unique} skeleton-optimal tree $T_0$ such that
\[
  C(T_0) \;\ge\; (2 - \epsilon)\, C^*.
\]
No ratio above $3$ is possible for any network, since $C(T_0) \le 3V^* \le 3\,C^*$.
\end{theorem}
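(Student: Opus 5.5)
The final sentence of the statement needs no construction. By Theorem~\ref{thm:law} every tree has overhead in $[1,3]$, so $C(T_0) \le 3V(T_0) = 3V^*$. Also $C(T) \ge V(T) \ge V^*$ for every $T$, so $C^* \ge V^*$. Combining the two gives $C(T_0) \le 3C^*$. The substance is therefore the lower-bound family. I would build it as a small parametrised gadget with exactly two green leaves $g_1, g_2$, so that by Proposition~\ref{prop:counts} every tree has exactly one merge. The realified cost of a tree $T$ is then
\[
  C(T) = V(T) + 2\,v_{\mathrm{merge}}(T) + v_{\mathrm{ride}}(T),
\]
where $v_{\mathrm{merge}}$ and $v_{\mathrm{ride}}$ are absolute volumes. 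The whole game is to control where that single merge lands and how much ride volume accompanies it.

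The design target is two competing trees, and I would fix their intended costs first and only then tune bond dimensions to realise them.
\begin{itemize}
\item The skeleton optimum $T_0$ should perform a few cheap real--real passes, after which $g_1$ and $g_2$ meet on opposite sides of the dominant step. That step is then the merge and carries essentially all the volume $V$, so $C(T_0) \approx 3V$.
\item A competitor $T_1$ should first join $g_1$ and $g_2$ through a cheap path, so that the merge has volume $o(V)$. The dominant contraction then involves at most one green operand: mostly a pass, with a ride share of roughly half of it.
\end{itemize}
Concretely, I aim for $V(T_1) = (1+\delta)V$ with $v_{\mathrm{ride}}(T_1) \approx V/2$, giving $C(T_1) \approx (3/2+\delta)V$. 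This yields $C(T_0)/C^* \ge C(T_0)/C(T_1) \to 2$ as $\delta \to 0$ and a large parameter $D \to \infty$, and choosing $D$ and $\delta$ in terms of $\epsilon$ finishes the bound.

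For the gadget itself I would try the following first. Place $g_1$ and $g_2$ at the ends of a chain of real tensors with bond dimension $D$. Tie the middle of the chain to a real block through a wide index, whose shape forces the cheapest tree to contract the real block last, between the two green halves. Give the two greens a thin direct index, of dimension $1$ or $2$, so that $T_1$ can merge them early at negligible volume. The price of $T_1$ is then that the green side has to be carried, as a ride, over the portion of the real block it touches.

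The steps, in order, are:
\begin{enumerate}
\item Write the volumes of every step of $T_0$ and $T_1$ as polynomials in $D$ and the thin dimension.
\item Check that $T_0$ has strictly smaller skeleton volume than $T_1$ and than every other tree. The gadget has few leaves, so the binary trees can be enumerated up to symmetry, with dominant terms compared at large $D$.
\item Read off $m(T_0) \to 1$, and read off $m(T_1) \to 0$ with $r(T_1)$ near $1/2$.
\end{enumerate}

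The main obstacle is step 2, establishing \emph{uniqueness} of the skeleton optimum while keeping $V(T_1)/V(T_0) \to 1$. These two requirements pull against each other: a near-tie between $T_0$ and $T_1$ invites other trees to tie as well. I would handle this with a lower-order tiebreak term, adding a small extra index dimension that penalises $T_1$ and every other tree by a factor $1+\delta$. The remaining delicate point is checking that no third tree achieves both a small merge volume and a small ride volume, since such a tree would push $C^*$ below $3V/2$ and only strengthen the ratio. Such a tree does not threaten the lower bound, because $C^* \le C(T_1)$ is all that is used, but I would verify it so as to confirm that the constant $2$ is the gadget's true ratio.
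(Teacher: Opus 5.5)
\paragraph{Verdict.} Your argument for the last sentence is correct and is the paper's: $C(T_0)\le 3V(T_0)=3V^*\le 3C^*$. The existence claim, however, is not proved, because you never fix a network. The trees $T_0,T_1$, their step volumes, and the uniqueness of $T_0$ are all deferred to steps 1--3. Those steps run on a gadget described only in words (``a chain'', ``a real block through a wide index'', ``a thin direct index''), with no dimensions, so no inequality can be checked. The difficulty you correctly single out, uniqueness under a near-tie, is left to an unspecified tiebreak. For an existence statement of this kind, the explicit family and the uniqueness argument \emph{are} the proof.

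\paragraph{The named devices.} The two devices you do name also do not behave as intended.
\begin{itemize}
\item \emph{Direct $g_1$--$g_2$ index.} Since $g_1$ and $g_2$ lie on opposite sides of $T_0$'s dominant merge, such an index is necessarily one of that step's indices and multiplies its volume by its dimension. $T_1$, by contrast, contracts it in the cheap first merge. Dimension $2$ therefore tilts the skeleton comparison toward $T_1$. Dimension $1$ is vacuous: the early merge is then an outer product whose cost is set by the greens' own legs, which your sketch leaves unspecified.
\item \emph{Tiebreak.} A tiebreak of relative size $\delta\to 0$ cannot come from adding a small index, which multiplies every step containing it by an integer $d\ge 2$. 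It has to come from a ratio of large dimensions (the paper's leg $a_0$ of dimension $\chi-1$ against $\chi$) or from lower-order terms. You must then still check that it favours $T_0$ over \emph{every} tree, which is exactly your unexecuted step 2.
\end{itemize}

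\paragraph{The paper's route.} The paper avoids the merge channel and uses rides. Take the open chain $A_1\cdots A_N$ with green head $A_1,A_2$. A fragment-counting argument makes the left-to-right caterpillar the unique skeleton optimum. In it the green block rides every step, so the overhead tends to $2$, while the quarantining tree pays only a $\chi/(\chi-1)$ skeleton premium.

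\paragraph{How your route could be completed.} Your merge-channel route is viable, but it needs the ingredient your sketch lacks. The greens should act as square gates on legs that stay open through the dominant step. Then absorbing them beforehand is cheap, and absorbing them afterwards costs a vanishing fraction.

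For example, take real tensors $X$ with indices $(a,c,x)$ and $Y$ with indices $(b,c,y)$. Take green $g_1$ on $(u,a)$ and $g_2$ on $(w,b)$. Set $|a|=|b|=|u|=|w|=U$, $|c|=\Gamma$, $|x|=|y|=n$, with $u,w,x,y$ open.
\begin{itemize}
\item Checking all $15$ binary trees shows that $T_0=(g_1X)(g_2Y)$ is the unique skeleton optimum whenever $U\ge 2$, $n\ge 3$ and $\Gamma<Un$.
\item $T_0$ consists of two rides of volume $U^2\Gamma n$ followed by a merge of volume $U^2\Gamma n^2$.
\item The tree $((XY)g_1)g_2$ gives $C^*\le U^2\Gamma n^2+5U^3n^2$.
\end{itemize}
Hence $C(T_0)/C^*\ge 3\Gamma/(\Gamma+5U)$. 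This is already $2$ at $U=2$, $\Gamma=20$, $n=11$, and it approaches $3$ as $\Gamma/U\to\infty$ (with $n>\Gamma/U$).
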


\begin{proof}
Fix $\chi \ge \max(3, \lceil 4/\epsilon \rceil)$ and take $N$ large. Take the open
chain $A_1 A_2 \cdots A_N$ with bonds $a_i$ between $A_i$ and $A_{i+1}$ of
dimension $\chi$, an open leg $a_0$ of dimension $\chi - 1$ on $A_1$, and an
open leg $a_N$ of dimension $\chi$ on $A_N$. Leaves $A_1, A_2$ are green, the
rest real.

\emph{Unique skeleton optimum.} Consider any step of any binary tree: a
contraction of disjoint leaf sets $P$ and $Q$, of cost the product of the
open legs of $P \cup Q$ and the bonds joining $P$ to $Q$. A leaf set with $f$
maximal contiguous fragments has exactly $2f$ open legs (each fragment ends
either at a bond to its complement or at $a_0$ or $a_N$), and $a_0$ is the
only leg of dimension $\chi - 1$. A step of cost $(\chi-1)\chi^2$ must
therefore expose exactly the legs $\{a_0, a_j\}$ and contract one bond:
$P \cup Q$ is a prefix interval containing $A_1$, split by one bond into two
intervals; the step absorbs an adjacent block into the prefix containing
$A_1$. Every other step costs at least $\chi^3$: a one-fragment union without
$A_1$ exposes two $\chi$-legs and contracts at least one bond
($\ge \chi^3$); a fragmented union ($f \ge 2$) exposes at least four legs, at
most one of dimension $\chi - 1$ ($\ge (\chi-1)\chi^3 \ge 2\chi^3$); a prefix
union whose parts interleave contracts at least two bonds
($\ge (\chi-1)\chi^3$). A tree all of whose $N-1$ steps are of the cheap kind
grows a prefix interval from $A_1$ one leaf at a time: it is exactly the
left-to-right caterpillar $T_0 = (\cdots((A_1 A_2) A_3) \cdots) A_N$, with
$V(T_0) = (N-1)(\chi-1)\chi^2$. Any other binary tree has at least one step
costing $\chi^3$ or more, hence
$V \ge (N-2)(\chi-1)\chi^2 + \chi^3 = V(T_0) + \chi^2 > V(T_0)$. So $T_0$ is
the unique skeleton optimum and $V^* = V(T_0)$.

\emph{Cost of the converted tree.} In $T_0$ the first step $A_1 A_2$ is a
merge, and every later step rides the green block:
$C(T_0) = 3(\chi-1)\chi^2 + 2(N-2)(\chi-1)\chi^2 = (2N-1)(\chi-1)\chi^2$.

\emph{A quarantining tree.} Let $T_q$ contract $A_3 \cdots A_N$
left-to-right first ($N-3$ passes of cost $\chi^3$), then absorb $A_2$ (one
ride, $2\chi^3$), then $A_1$ (one merge, $3(\chi-1)\chi^2$):
$C(T_q) \le (N-1)\chi^3 + 3\chi^3$. Hence
\[
  \frac{C(T_0)}{C^*} \;\ge\; \frac{C(T_0)}{C(T_q)}
  \;\ge\; \frac{(2N-1)(\chi-1)}{(N+2)\,\chi}
  \;\xrightarrow{N \to \infty}\; 2 - \frac{2}{\chi} \;\ge\; 2 - \frac{\epsilon}{2},
\]
so every sufficiently large $N$ attains the claimed $2 - \epsilon$.
\end{proof}

\noindent The mechanism is the ride channel: skeleton optimality forces the
green head of the chain to ride every step (overhead $\to 2$), while
quarantining the two green leaves until the end costs only a
$\chi/(\chi-1)$ skeleton premium that vanishes for large $\chi$. The
quarantining tree is contiguous (no outer products are needed), so the gap
is visible to the optimizer of Sec.~\ref{sec:bench}. Whether a family can
force the \emph{merge} channel onto the dominant volume of every
near-optimal skeleton tree while remaining avoidable at vanishing premium,
pushing the ratio toward the trivial ceiling $3$, is open; our attempts at
such constructions ran into a trade-off (embedding green tensors deep enough
to force merges everywhere also forces every competing tree to pay the ride
factor on the same volume, capping the gap at $3/2$), and we conjecture the
true supremum lies strictly below $3$.

\paragraph{The search space contains the gap trees.}
Both annealing passes draw moves from the rotation family: at an internal
edge with subtrees $x$ (sibling), $b, c$ (children), the move exchanges $x$
with one child, a nearest-neighbor interchange (NNI). NNI moves connect the space
of binary trees over a fixed leaf set~\cite{Robinson1971}, and neither the
move generator nor the cost evaluator excludes steps between operands with no
shared index: outer products are proposed and priced like any other step
(their volume is the product of all open dimensions). The flatness observed
in Sec.~\ref{sec:lawbench} is therefore not an artifact of an excluded move
class. The standard caveat applies: connectivity does not imply rapid mixing,
so annealing certifies only what it samples.

\paragraph{Where the benchmarks sit.}
The two theorems bracket the observations of Sec.~\ref{sec:lawbench}. Eleven
circuits carry merge fractions $0.695$--$1.00$ and behave as
Theorem~\ref{thm:transfer} predicts: convert-only
matches the full anneal to $10^{-4}$. The 5-qubit test circuit is the
green-sparsest of the twelve random circuits ($n_c/n = 11/40$, $m = 0.27$) and shows
the gap mechanism in miniature: the winning realified tree buys better merge
placement at a $+3.3\%$ skeleton premium (Table~\ref{tab:bench}). The channel
differs (Theorem~\ref{thm:gap}'s worst case is carried by rides, the 5-qubit
gain by merge placement) but the trade is the same: a small skeleton premium
buys a cheaper green-touched volume.
The structured families widen the sample: all 55 device-clean Clifford+$T$, QAOA, and VQE
instances, including the spatially clustered, green-sparse ones, stay within
$5\times10^{-4}$ of the full anneal, so the gap regime of
Theorem~\ref{thm:gap} remains unobserved outside adversarial constructions.

\section{Independent optimization cross-check}\label{app:indopt}

Table~\ref{tab:audit-ind} reports realified complexities from independently
optimized contraction orders (TreeSA, same policy as the main campaign). Here
$\mathrm{tc}_{\R}$, $\mathrm{sc}_{\R}$, and $\mathrm{rwc}_{\R}$ are the
optimizer's time complexity (total loop iterations, in $\log_2$), space
complexity (peak working set, in $\log_2$ elements), and read--write cost
(in $\log_2$ of the byte volume touched), all in the optimizer's loop-volume
convention; $\mathrm{rwc}_{\C}$ is the read--write cost of the corresponding
complex plan. The $\Delta\mathrm{tc}$ excess over the same-tree law value is a
predictable convention effect.
The cost law charges three skeleton-step loop volumes per merge
(one per rank-1 term of Eq.~\eqref{eq:rank3}); the optimizer additionally
counts the contractions of each length-2 factor vector
$u_k,v_k,w_k$ against the operands.
For a merge step whose operands have free-index dimensions $D_i,D_j$
and contraction-index dimension $D_k$ on the skeleton, the ratio of the
factor-graph loop volume to the skeleton step is
\begin{equation}\label{eq:kappa}
  \kappa = 3 + 6\!\left(\frac{1}{D_i}+\frac{1}{D_j}+\frac{1}{D_k}\right).
\end{equation}
Pass and ride steps carry no excess ($\kappa=1$ and $\kappa=2$,
respectively, matching their multiplication counts).
The $6/D$ correction is the cost of the dim-2
projections---$O(D^2)$ per step versus $O(D^3)$ for the
product---so $\kappa\to 3$ for large tensors and the two conventions
converge; the aggregate $\Delta\mathrm{tc}$ is a volume-weighted average
of the per-step excesses.
The table confirms the trend: the 5-qubit test circuit
(small tensors, large $\kappa$) shows 95\% excess, while the largest
circuits fall to 35\% or below.
All twelve independently optimized trees beat the archived executed schedule.

\begin{table}[tb]
  \centering
  \caption{Independently optimized realified complexities in the optimizer's
  loop-volume convention: $\mathrm{tc}_{\R}$ is the time complexity (total
  loop iterations), $\mathrm{sc}_{\R}$ the space complexity (peak working
  set), and $\mathrm{rwc}_{\R}$ the read--write cost (byte volume touched),
  all in $\log_2$; $\mathrm{rwc}_{\C}$ is the read--write cost of the
  corresponding complex plan. $\Delta\mathrm{tc}$ is the excess over the
  same-tree law value of Table~\ref{tab:bench}, reported as $2^{\Delta\mathrm{tc}}-1$
  in percent in the penultimate column; it is a predictable convention effect
  (Eq.~\ref{eq:kappa}: multiplication count versus loop volume), not an
  optimization deficit.
  Percentages are computed from unrounded $\Delta\mathrm{tc}$, so equal
  displayed $\Delta\mathrm{tc}$ values can map to slightly different
  percentages.}
  \label{tab:audit-ind}
  \small
  \begin{tabular}{lrrrrrr}
    \toprule
    circuit & $\mathrm{tc}_{\R}$ & $\mathrm{sc}_{\R}$ & $\mathrm{rwc}_{\R}$ & $\Delta\mathrm{tc}$ & $\Delta\mathrm{tc}$ excess (\%) & $\mathrm{rwc}_{\R}{-}\mathrm{rwc}_{\C}$ \\
    \midrule
    test (5q)            & 10.21 & 5.0  & 10.56 & $+0.96$ & $+95\%$ & $+1.51$ \\
    rect $4{\times}4$, d16 & 14.11 & 9.0  & 13.99 & $+0.82$ & $+77\%$ & $+1.76$ \\
    rochester-53, d8     & 16.76 & 10.0 & 16.34 & $+0.83$ & $+78\%$ & $+1.87$ \\
    rect $6{\times}6$, d16 & 17.31 & 11.0 & 16.35 & $+0.37$ & $+30\%$ & $+1.49$ \\
    bristlecone-48, d16  & 17.45 & 11.6 & 16.66 & $+0.51$ & $+43\%$ & $+1.45$ \\
    rect $6{\times}6$, d24 & 23.63 & 18.0 & 21.97 & $+0.29$ & $+22\%$ & $+1.32$ \\
    bristlecone-70, d16  & 20.36 & 14.6 & 18.70 & $+0.37$ & $+29\%$ & $+1.43$ \\
    rect $6{\times}6$, d32 & 30.24 & 24.0 & 28.28 & $+0.35$ & $+28\%$ & $+0.86$ \\
    sycamore-53, m=10    & 34.61 & 27.0 & 31.51 & $+0.42$ & $+33\%$ & $+0.88$ \\
    \midrule
    rochester-53, d12    & 21.18 & 14.0 & 19.15 & $+0.45$ & $+37\%$ & $+1.10$ \\
    bristlecone-70, d24  & 27.52 & 22.0 & 25.82 & $+0.29$ & $+22\%$ & $+0.90$ \\
    rect $8{\times}8$, d24 & 30.83 & 24.0 & 28.55 & $+0.43$ & $+35\%$ & $+0.66$ \\
    \bottomrule
  \end{tabular}
\end{table}

\section{Software pipeline and reproducibility}\label{app:pipeline}

Circuits are single-amplitude networks $\langle 0|U|0\rangle$, each contracting
to one scalar amplitude of the circuit unitary~$U$, built with the Yao
framework~\cite{Luo2020} (see also TensorCircuit~\cite{Zhang2023}). Circuit
labels give the layout and depth: d8--d32 counts layers, and Sycamore is labeled
by cycles~$m$. Eight qflex networks~\cite{Villalonga2019} plus a 5-qubit test circuit
vendored with the suite form the core nine; three extension circuits bring
the total to twelve. Cost is the einsum multiplication count, the
scalar-multiplication volume at
uniform bond dimension~2. Leaves are classified \emph{green} when their gate
tensor has a nonzero imaginary part (Sec.~\ref{sec:rep}); the test is exact and is applied once at network
construction: any bitwise-nonzero f64 entry of the vendored imaginary block
marks the leaf, with no tolerance and no gauge
preprocessing.

Every number in Sec.~\ref{sec:bench} is produced by an all-Rust pipeline:
circuits are parsed and converted to tensor networks by yao-rs, contraction
orders are optimized by omeco, a simulated-annealing optimizer over binary
contraction trees (tree simulated annealing, TreeSA~\cite{Kalachev2021},
initialization, then the green-aware anneal), and plans are archived, audited, and
executed by omeinsum-rs, all pinned by commit SHA in the NPUBenchmarkData
repository.

\section{Verification of the algebraic rules}\label{app:axioms}

The four rules, the three identities of Eq.~\eqref{eq:rules} together with
the cascade rule, are verified from the closed form
$\Ct_{abc} = \operatorname{Re}(i^{\,a+b+c-3})$ of Eq.~\eqref{eq:ctensor};
spider fusion (Sec.~\ref{sec:algebra}) then follows by induction on the tree.

\emph{Permutation invariance.}\; The closed form depends only on the sum $a+b+c$, so
$\Ct_{abc} = \Ct_{\sigma(a)\sigma(b)\sigma(c)}$ for all $\sigma \in S_3$.

\emph{Conjugate covariance ($Z^{\otimes 3}$).}\; $Z$ maps index $1 \mapsto 1$,
$2 \mapsto -2$, so $\sum Z_{aa'}Z_{bb'}Z_{cc'}\Ct_{a'b'c'}$ picks up a factor
$(-1)^{\#\{a,b,c\,=\,2\}}$; since $\Ct_{abc}$ vanishes unless $a+b+c-3$ is even,
the sign is always $+1$.

\emph{Unit rule.}\; $\sum_b \Ct_{abc}\,\mathbf{1}_b = \Ct_{a1c}$
since $\mathbf{1} = (1,0)^\top$. Direct evaluation: $\Ct_{a1c} = \operatorname{Re}(i^{a+c-2})$,
which equals $Z_{ac} = \delta_{ac}(3-2a)$.

\emph{Cascade rule (associativity).}\; With $M = \Ct Z$, both $(xy)z$ and
$x(yz)$ reduce, after stripping the common output $Z$, to the $Z$-dressed
chain $\sum_{d,d'} \Ct_{abd'} Z_{d'd}\, \Ct_{dce}$, which by direct evaluation
equals the fully symmetric order-4 tensor with entries
$\operatorname{Re}(i^{\,a+b+c+e-4})$; full symmetry makes the two
parenthesizations equal. The internal $Z$ is essential: the undressed chain
$\sum_d \Ct_{abd}\Ct_{dce}$ is not permutation symmetric.

\emph{Spider fusion.}\; Define the $k$-leg spider
$S_{a_1 \cdots a_k} = \operatorname{Re}\!\left(i^{\,a_1+\cdots+a_k-k}\right)$,
fully symmetric because it depends only on the index sum; $S_{ab} = Z_{ab}$
and $S_{abc} = \Ct_{abc}$. Joining one further $\Ct$ node through a
$Z$-dressed edge extends the spider by one leg,
$\sum_{d,d'} S_{a_1 \cdots a_{k-1} d}\, Z_{dd'}\, \Ct_{d'ce} =
S_{a_1 \cdots a_{k-1} ce}$, so by induction every tree of $\Ct$ nodes with
$Z$-dressed internal edges contracts to the spider on its external legs,
independent of the tree shape. Closing two legs of a spider with a
$Z$-dressed edge instead multiplies it by two,
$\sum_{d,e} S_{a_1 \cdots a_{k-2} de}\, Z_{de} = 2\, S_{a_1 \cdots a_{k-2}}$,
which is why cyclic wirings are excluded from the fusion rule.

These rules mirror the field axioms: the cascade rule is associativity; permutation
invariance of $\Ct$ is commutativity; the unit rule states $z \cdot 1 = z$ (contraction
with $\mathbf{1}$ through $\Ct$ gives $(z \cdot 1)^* = Zz$, and the trailing $Z$ of the
multiplication conjugates back to $z$); addition is componentwise (linearity); and
distributivity is the multilinearity of contraction itself.

\section{Derivation of the reverse-mode pullback}\label{app:pullback}

This appendix derives Eq.~\eqref{eq:pullback} and the wiring of
Fig.~\ref{fig:algebra}(g) from the forward contraction and the
standard multilinear chain rule.
Figure~\ref{fig:pullback-deriv} places the forward and backward
diagrams side by side with index labels for reference.

\paragraph{Forward rule (Fig.~\ref{fig:pullback-deriv}(a)).}
The realified product $C = AB$ contracts the green legs of
$T_A$ and $T_B$ through the multiplication tensor
$M_{abc} = \sum_{c'}\Ct_{abc'}\,Z_{c'c}$:
\begin{equation}\label{eq:forward-idx}
  [T_C]_{ij,c}
  = \sum_{k,\,a,\,b} [T_A]_{ik,a}\,[T_B]_{kj,b}\,M_{abc}.
\end{equation}
In the diagram, $T_A$'s green leg ($a$) and $T_B$'s green leg
($b$) enter $\Ct$ (legs~1 and~2); the third leg passes
through~$Z$ to become $T_C$'s green leg~$c$.

\paragraph{Pullback.}
The map $T_B \mapsto T_C$ is linear with $T_A$ and $M$ fixed.
For a real scalar loss~$\ell$ with upstream adjoint
$\ov{T}_C = \partial \ell/\partial T_C$,
differentiating~\eqref{eq:forward-idx} and contracting with
$\ov{T}_C$ gives
\begin{equation}\label{eq:pullback-idx}
  [\ov{T}_B]_{kj,b}
  = \sum_{i,\,a,\,c}
     [T_A]_{ik,a}\;[\ov{T}_C]_{ij,c}\;M_{abc}.
\end{equation}
This is the same $M$ as the forward rule~\eqref{eq:forward-idx},
but with the roles of~$M$'s legs swapped:
leg~2 ($b$, formerly $T_B$'s input) is now the \emph{open output}
for~$\ov{T}_B$;
leg~3 ($c$, formerly the output) is now an \emph{input}
from~$\ov{T}_C$.
The physical indices also transpose:
$T_A$ and~$\ov{T}_C$ share index~$i$
(the ``$\dagger$-direction'' contraction),
with $k$ and~$j$ open.

\paragraph{Where $Z$ lands (Fig.~\ref{fig:pullback-deriv}(b)).}
Since $Z$ is diagonal, $M_{abc} = \Ct_{abc}\,Z_c$, so
\begin{equation}\label{eq:pullback-expanded}
  [\ov{T}_B]_{kj,b}
  = \sum_{i,\,a,\,c}
    [T_A]_{ik,a}\;
    \underbrace{Z_c\,[\ov{T}_C]_{ij,c}}_{[Z\,\ov{T}_C]_{ij,c}}\;
    \Ct_{abc}.
\end{equation}
The $Z$ factor acts on $\ov{T}_C$'s green index~$c$
(conjugation); the output leg~$b$ carries no~$Z$.
This is the wiring of Fig.~\ref{fig:algebra}(g):
$T_A$'s green leg and $\ov{T}_C$'s green leg (passing
through~$Z$) enter~$\Ct$, whose third leg is
the open output for~$\ov{T}_B$.

\paragraph{Complex-picture check.}
Bare $\Ct$ computes the conjugated product:
$\sum_{a,b} x^a\,y^b\,\Ct_{abc} = [(xy)^*]_c$
(Sec.~\ref{sec:rep:local}).
In~\eqref{eq:pullback-expanded} the two inputs to $\Ct$
are $T_A$ and $Z\,\ov{T}_C = T_{\ov{C}^*}$.
The contraction gives $(A \cdot \ov{C}^*)^* = A^*\,\ov{C}$;
combined with the physical-index transpose this is
$\ov{B} = A^\dagger\,\ov{C}$.

\begin{figure}[htb]
  \centering
  \includegraphics[width=0.88\linewidth]{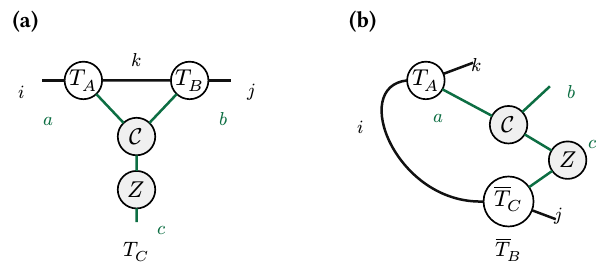}
  \caption{Forward and backward contractions with green-index labels.
  (a)~Forward product: $T_A$'s and $T_B$'s green legs ($a$,~$b$)
  enter~$\Ct$, whose third leg passes through~$Z$ to
  produce~$c$.
  (b)~Pullback with respect to~$T_B$:
  $\ov{T}_C$ enters through~$Z$ (leg~$c$, where $Z$ lives
  in~$M$), $T_A$ enters leg~$a$, and leg~$b$ is the
  open output.
  The curved black line marks the shared physical index~$i$
  (transpose).}
  \label{fig:pullback-deriv}
\end{figure}

\section{Precision gates and error budget}\label{app:precision}

The pipeline gates correctness before speed at two levels: every timed invocation first
reproduces the CPU f32 contraction of the same plan in-process, and admitted values must
then agree across backends, a check we call the \emph{pairwise output gate}. Within-device
representations, native-complex versus tree-real,
are compared at absolute tolerance~$0$ and relative tolerance~$10^{-5}$.
The within-device gate has a zero absolute floor by
design, since the amplitudes themselves can fall below $10^{-12}$.

The newer matched campaign of Sec.~\ref{sec:wallclock} adds twelve
GEMM-4M/network-3M pairs, and all twelve pass the fixed pairwise output gate. An
archived same-artifact reference provides a further check for nine of the twelve pairs;
the remaining three retain the pairwise gate only.

The gate set itself is also validated: every custom gate matrix the importer can emit is
checked for unitarity, and two exactly simulatable instances are reproduced to
${\sim}\,10^{-16}$ by an independent state-vector simulator that shares no code with the
importer. The archived plans store tensor data in f32, and on deep circuits that input
quantization contributes roughly half of the
measured error (the quantization-vs-execution split is archived with the c64
reference artifacts in the data repository); the NumPy float64 extension reference tests execution-path error but does
not remove this input-quantization contribution. The $10^{-4}$ cross-backend gate is a
conservative engineering tolerance, not an acceptance criterion tied to an application.

\bibliographystyle{unsrtnat}
\bibliography{refs}

\end{document}